\newcolumntype{Y}{>{\centering\arraybackslash}X}
\newcolumntype{Z}{>{\scriptsize}Y}
\pgfplotsset{compat=1.17}
\theoremstyle{thmstyleone}%
\newtheorem{theorem}{Theorem}
\newtheorem{proposition}[theorem]{Proposition}%
\newtheorem{corollary}[theorem]{Corollary}%
\newtheorem{lemma}[theorem]{Lemma}%
\theoremstyle{thmstyletwo}%
\newtheorem{example}{Example}%
\theoremstyle{thmstylethree}%
\newtheorem{definition}{Definition}%
\colorlet{review}{black} 
\newcommand{\A}{\mathcal{A}}
\newcommand{\B}{\mathcal{B}}
\newcommand{\NM}{\operatorname{NM}} 
\newcommand{\N}{\mathbb{N}} 
\newcommand{\R}{\mathbb{R}}
\newcommand{\sphere}{\mathcal{S}}
\newcommand{\Z}{\mathbb{Z}} 
\renewcommand{\Pr}{\operatorname{Prob}}
\newcommand{\F}{\ensuremath{\mathbb F}}
\newcommand{\boltzmParam}{\beta}
\DeclareMathOperator{\wt}{wt}
\DeclareMathOperator{\dist}{d}
\DeclareMathOperator{\rk}{rk}
\DeclareMathOperator{\Var}{Var}
\newcommand{\Alphabet}{\ensuremath{\mathcal{A}}}
\newcommand{\AlphabetWeight}{\ensuremath{\wt_{\mathcal{A}}}}
\newcommand{\SumAlphabetWeight}{\ensuremath{\wt_{\Sigma \mathcal{A}}}}
\newcommand{\AlphabetDist}{\ensuremath{\dist_{\Sigma \mathcal{A}}}}
\newcommand{\SumRankWeight}{\ensuremath{\wt_{\Sigma R}}}
\newcommand{\PartitionFunction}{\ensuremath{\mathcal{Z}}}
\newcommand{\card}[1]{\left\vert{#1}\right\vert} 
\newcommand{\set}[1]{\left\lbrace{#1}\right\rbrace} 
\newcommand{\st}{\, : \,} 
\definecolor{plotblue}{rgb}{0, 0.4470, 0.7410}
\definecolor{plotorange}{rgb}{0.8500, 0.3250, 0.0980}
\definecolor{plotyellow}{rgb}{0.9290, 0.6940, 0.1250}
\definecolor{plotpurple}{rgb}{0.4940, 0.1840, 0.5560}
\definecolor{plotgreen}{rgb}{0.4660, 0.6740, 0.1880}
\definecolor{plotcyan}{rgb}{0.3010, 0.7450, 0.9330}
\definecolor{plotred}{rgb}{0.6350, 0.0780, 0.1840}
\begin{document}
\title[Bounds on Sphere Sizes]{Bounds on Sphere Sizes in the Sum-Rank Metric and Coordinate-Additive Metrics}


\author*[1]{\fnm{Hugo} \sur{Beeloo-Sauerbier Couvée}}\email{hugo.sauerbier-couvee@tum.de}
\author[2]{\fnm{Thomas} \sur{Jerkovits}}\email{thomas.jerkovits@dlr.de}
\author[1]{\fnm{Jessica} \sur{Bariffi}}\email{jessica.bariffi@tum.de}

\affil[1]{\orgdiv{Department of Electrical and Computer Engineering}, \orgname{Technical University of Munich}, \orgaddress{\city{Munich}, \country{Germany}}}
\affil[2]{\orgdiv{Department of Satellite Networks}, \orgname{German Aerospace Center}, \orgaddress{\city{Wessling-Oberpfaffenhofen}, \country{Germany}}}

\abstract{
     This paper provides new bounds on the size of spheres in any coordinate-additive metric with a particular focus on improving existing bounds in the sum-rank metric. We derive improved upper and lower bounds based on the entropy of a distribution related to the Boltzmann distribution, which work for any coordinate-additive metric. Additionally, we derive new closed-form upper and lower bounds specifically for the sum-rank metric that outperform existing closed-form bounds.
}
\keywords{
    Sum-rank metric,
    Coordinate-additive metric,
    Sphere size,
    Combinatorics,
    Coding theory,
    Information theory
}

\maketitle

\section{Introduction}\label{sec:introduction}
Classically, the study of error-correcting codes focuses on codes in the Hamming metric~\cite{hamming1950error} where the amount of errors happening depends on the number of entries that change during the transmission. With different applications, other coordinate-additive metrics were introduced and gained more attention. The sum-rank metric~\cite{nobrega}, Lee metric~\cite{lee1958some}, and also the Hamming metric, are examples of coordinate-additive metrics. Codes with distance properties in such metrics are of particular interest in various applications, such as linear network coding~\cite{Lu2005ATradeoff}, quantum-resistant cryptography~\cite{Horlemann-Trautmann2021InformationCryptography,puchinger2022generic}, coding for storage~\cite{martinezpenas2019locally}, space-time coding~\cite{shehadeh2021spacetime}.

One crucial task is to understand the performance limits of codes endowed with a given coordinate-additive metric. Bounds like the sphere-packing bound or the Gilbert--Varshamov bound~\cite{gilbert1952comparison} play a crucial role in understanding these limits, and are derived using bounds on the size of an $\ell$-dimensional ball or sphere in the corresponding metrics. The exact value for the size of an $\ell$-dimensional sphere $\sphere_{t}^{\ell}$ of radius $t$ in any coordinate-additive metric can be derived by computing all its (ordered) integer partitions, where each part of the partition has at most a part size of the maximal possible weight in the corresponding metric. These represent the decomposition of the nonzero entries of the elements in the sphere. To obtain the size of the sphere, we sum over all integer partitions, adding up the number of elements that have a weight decomposition corresponding to the integer partition. Although this procedure provides the exact value of $\card{\sphere_t^{\ell}}$, it often does not give an intuitive or practical understanding of the sphere size or how this size changes as the parameters change. For large parameters, it is even impractical to compute the size in this way.  Hence, the derivation of closed-form bounds on the exact formula is of major interest. A current method of obtaining both upper and lower bounds on $\card{\sphere_{t}^{\ell}}$ is to consider only the partition attaining the maximum number of elements. This approach is utilized by \cite{Ott2021BoundsCodes,puchinger2022generic,gruica2023densities}. Another method is to bound the size of an $\ell$-dimensional ball $\B_{t}^{\ell}$ of radius $t$, since every upper bound on $\card{\B_{t}^{\ell}}$ is also a valid upper bound on $\card{\sphere_{t}^{\ell}}$. On a complex analytic side, sizes of spheres and balls can be described using generating functions, whose coefficients can be computed using the saddle-point technique and other techniques from analytic combinatorics (see \cite{flajolet2009analytic,saddle}). In 1994, Löliger presented an information-theoretic approach for bounding the volume of an $\ell$-dimensional ball concerning any coordinate-additive metric, via the entropy of an auxiliary probability distribution~\cite{löliger1994upperbound}. Similar arguments have been used to derive bounds for Lee-metric codes \cite{bhattacharya2019method,bariffi2024error}.

Specifically addressing the sum-rank metric, closed-form upper and lower bounds for the sphere size were introduced in~\cite{puchinger2022generic,Ott2021BoundsCodes} and further discussed in \cite{gruica2023densities}. However, these bounds are limited in their tightness, particularly noticeable in scenarios involving smaller sizes of the base field $q$ and/or a larger number of blocks $\ell$.

In this paper we will discuss \textcolor{review}{different} approaches to bound sizes of spheres in a coordinate-additive metric. We derive novel improved lower and upper bounds on that size first in a general setting for any coordinate-additive metric and later explicitly focusing on the sum-rank metric.
The paper is organized as follows. We start by introducing the basic notions and concepts needed throughout the paper in Section \ref{sec:preliminaries}. In Section \ref{sec: information-theoretic} we present bounds on the size of a sphere coming from \textcolor{review}{an} information-theoretic argument. We start the section by introducing the entropy function of a random variable, as well as the probability distribution of a typical sequence in an $\ell$-dimensional sphere endowed with a given coordinate-additive metric. Furthermore, we recap existing upper bounds given by the entropy of the typical sequences. In a next step, we give novel upper and lower bounds using the entropy of a typical sequence. The bounds derived hold for any coordinate-additive metric over a given alphabet. We then turn our attention to the sum-rank metric in Section \ref{sec: Bounds for Spheres in the Sum-rank Metric} where we first recap the existing bounds on the size of an $\ell$-dimensional sphere in the sum-rank metric. Next, we present improved upper and lower bounds on the size of such a sphere using convolution arguments and ordinary generating functions, respectively. The new bounds presented in the Sections \ref{sec: information-theoretic} and \ref{sec: Bounds for Spheres in the Sum-rank Metric} are then compared to already existing bounds in Section \ref{sec:comparison}. Final conclusions are given in Section \ref{sec:conclusions}.
\section{Preliminaries}\label{sec:preliminaries}
In this section, we introduce the necessary definitions, notations, and concepts that are used throughout the paper.
In the following, let $q$ be a prime power and denote by $\F_q$ the finite field of $q$ elements. The natural numbers $\N$ include $0$. 

\subsection{Coordinate-Additive Metrics}\label{subsec:coordinate-additive}
Let $(\Alphabet,+)$ be a finite abelian group with identity element $0$ called the \textbf{alphabet}. We define a \textbf{weight function} $\AlphabetWeight : \Alphabet \to \N$ on $\Alphabet$ to be a function satisfying for all $a,b\in \Alphabet$:
\begin{enumerate}[noitemsep,topsep=4pt]
    \item $\AlphabetWeight(a) = 0$ if and only if $a=0$,
    \item $\AlphabetWeight(a) = \AlphabetWeight(-a)$,
    \item $\AlphabetWeight(a+b) \leq \AlphabetWeight(a) + \AlphabetWeight(b)$.
\end{enumerate}
This function can be extended to a \textbf{coordinate-additive weight function} on the cartesian product $\Alphabet^{\ell}$ (with group structure inherited coordinate-wise from $\Alphabet$) by defining the weight of an $\ell$-tuple to be the sum of the weights of its coordinates, i.e., $$\SumAlphabetWeight(a_1,\ldots,a_\ell) = \sum_{i=1}^{\ell}\AlphabetWeight(a_i).$$
This coordinate-additive weight function naturally induces a 
\textbf{metric} 
\begin{align}
    \begin{array}{cccc}
        \AlphabetDist : & \Alphabet^{\ell} \times \Alphabet^{\ell} & \longrightarrow & \N \\
         & (v, w) & \longmapsto & \SumAlphabetWeight(v-w).
    \end{array}
\end{align}

Examples of well-studied metrics induced by coordinate-additive weight functions are the \textbf{Hamming metric} with weight function 
\begin{align}
    \begin{array}{cccc} 
        \wt_H: & \Alphabet^{\ell} &\longrightarrow & \N \\ 
         & (a_1,\ldots,a_\ell) &\longmapsto & \sum_{i=1}^{\ell} [a_i \neq 0] 
    \end{array}
\end{align}
where $[\cdot]$  denotes the Iverson bracket and the \textbf{Lee metric} with weight function 
\begin{align}
    \begin{array}{cccc}
        \wt_L : & (\Z/n\Z)^{\ell} & \longrightarrow & \N \\ 
            & (a_1,\ldots,a_\ell) &\longmapsto & \sum_{i=1}^{\ell} \min(a_i, n-a_i),
    \end{array}
\end{align}
Of particular interest to this paper is the \textbf{sum-rank metric} on $\F_q^{m \times n}$, the space of $m \times n$ matrices over the finite field $\F_q$ with $m$ and $n$ positive integers. \textcolor{review}{In this paper, we focus on the case where $n = \eta \ell$, for two positive integers $\eta$ and $\ell$.} Then, 
under the isomorphism $\F_q^{m \times n} = \F_q^{m \times \eta \ell} \cong (\F_q^{m \times \eta})^{\ell}$, every matrix $M \in \F_q^{m \times n}$ can be represented as a sequence of $\ell$ blocks \textcolor{review}{ of $m \times \eta$} matrices $B_i \in \F_{q}^{m \times \eta}$, i.e. $M = (B_1 \mid B_2 \mid \ldots \mid B_\ell)$, and its \textbf{sum-rank weight} is given by the weight function
\begin{align}
\begin{array}{cccc}
        \SumRankWeight: & (\F_q^{m \times \eta})^{\ell} & \longrightarrow & \N \\ 
        & (B_1 \mid B_2 \mid \ldots \mid B_\ell) &\longmapsto & \sum_{i=1}^{\ell} \rk_q(B_i).
    \end{array}
\end{align}
Here $\rk_q(B_i)$ denotes the rank of $B_i$ over $\F_q$. The special case of the sum-rank metric with $\ell = 1$ is called the \textbf{rank-metric}. Also note that the cases $m = 1$ or $\eta = 1$ are examples of the Hamming metric.\medbreak

Given a coordinate-additive weight function $\SumAlphabetWeight(\cdot)$ on $\Alphabet^{\ell}$, it is important to understand the size of an $\ell$-dimensional \textbf{sphere} and \textbf{ball}, respectively, of radius $t \in \N$ defined as
\begin{align}\label{eq:sphere_and_ball}
    \sphere_t^{\ell} &:= \{v \in \Alphabet^{\ell} \ : \ \SumAlphabetWeight(v) = t\} = \SumAlphabetWeight^{-1}(t)\\
    \B_t^{\ell} &:= \{v \in \Alphabet^{\ell} \ : \ \SumAlphabetWeight(v) \leq t\} = \SumAlphabetWeight^{-1}(\{0,1,\ldots,t\}).
\end{align}
Clearly, the following identities are satisfied
\begin{align}
    \sphere_i^{\ell} \cap \sphere_j^{\ell} = \emptyset \text{ if } i \neq j, \quad \mathcal{B}_t^{\ell} = \bigcup_{i=0}^t \sphere_i^{\ell} \quad \text{and} \quad \sphere_t^{\ell} = \mathcal{B}_t^{\ell} \setminus \mathcal{B}_{t-1}^{\ell}.
\end{align}
\noindent
For the sum-rank metric we define $$\mu_{\Sigma R} := \min\{m,\eta\} \, \text{ and }\, n := \eta \ell,$$

\noindent
and denote for every $0 \leq t \leq \mu_{\Sigma R}\ell$ the \textbf{sum-rank sphere} and \textbf{sum-rank ball}, respectively, of radius $t$ by 
\begin{align}
\sphere^{m,\eta,\ell,q}_t &:= \{M \in \F_q^{m \times \eta \ell} \ : \  \SumRankWeight(M) = t\} \\
    \B^{m,\eta,\ell,q}_t &:= \{M \in \F_q^{m \times \eta \ell} \ : \  \SumRankWeight(M) \leq t\}.
\end{align}
For fixed $m$,$\eta$,$q$,$\ell$, the sum-rank sphere sizes $\big|\sphere^{m,\eta,\ell,q}_t\big|$ can be computed with a dynamic program described in~\cite{puchinger2022generic} or a closed-form described in Subsection \ref{subsec:computing_sphere_size}. 
However, this formula is generally not practical to work with when an intuitive understanding of the approximate size is needed, such as theorems involving sphere-packings or sphere-coverings. Therefore, easy-to-work-with approximations or lower/upper bounds for the sphere sizes for all $m$,$\eta$,$q$,$\ell$ are of value to research on the sum-rank metric.

\newpage

\subsection{Sizes of Spheres}\label{subsec:computing_sphere_size}
Consider an alphabet $\Alphabet$ and a coordinate-additive weight $\AlphabetWeight$. Let $\ell$ and $t$ be positive integers and $\sphere^{\ell}_{t}$ be the $\ell$-dimensional sphere of radius $t$ with respect to the weight $\AlphabetWeight$ as introduced in \eqref{eq:sphere_and_ball}. The exact size of $\sphere^{\ell}_{t}$ can be obtained by computing all the (ordered) integer partitions of $t$, where each part size is at most the maximum possible weight in the corresponding metric. More explicitly, let us define the set of all possible weights in $\Alphabet$ by $$W_\Alphabet =  \{\AlphabetWeight(a) \ : \ a \in \Alphabet\} \subset \N .$$


Let us denote the maximum possible weight in $W_\Alphabet$ by $\mu := \max_{a\in \A}\set{ \AlphabetWeight(a)}$.
We denote by $A_j$ the number of elements $a \in \Alphabet$ of weight $j \in W_\Alphabet$, i.e., $$A_j =  \card{\{a \in \Alphabet \ : \ \AlphabetWeight(a) = j\}}.$$
Furthermore, for $t,\ell \in \N$, let $\mathcal{T}^{\ell}_{t}$ denote the set of all (ordered) integer partitions of $t$ of length $\ell$ with part sizes not exceeding $\mu$, i.e.,
\begin{align}
\mathcal{T}_t^\ell := \{(t_1,t_2,\ldots,t_\ell) \in W_\Alphabet^{\, \ell} \ : \ t_1 + t_2 + \ldots + t_\ell = t \}.
\end{align}
Then, the size of the $\ell$-dimensional sphere of radius $t$ can be computed as
\begin{align}\label{eq:exact_sphere_size}
\card{\sphere_t^{\ell}} = \sum_{\substack{(t_1,\ldots, t_\ell) \in \mathcal{T}_t^\ell}} \,  \prod_{i=1}^\ell A_{t_i} .
\end{align}
This formula can further be reduced depending on the metrics considered. For instance, in the Hamming metric, since tuples of Hamming weight $t$ consist of $t$ nonzero positions, we compute the number of ways to choose $t$ positions among a total number of positions $\ell$ and multiply by the number of options we have to choose the nonzero element from, i.e.,
\begin{align}
    \card{\mathcal{S}^{\ell}_{t, \mathsf{H}}} = \binom{\ell}{t}(\card{\Alphabet} - 1)^{t}.
\end{align}
Hence, in the Hamming metric we obtain a compact closed form expression for $\card{\sphere^{\ell}_{t}}$. However, when changing the metric this task can get more involved, as the set $\mathcal{T}^{\ell}_t$ might grow rapidly and, thus, summing over all its elements would be an exhaustive task. Therefore, finding simpler upper and lower bounds on the sum or the single summands is an important task. 

For a fixed weight function $\AlphabetWeight$ and fixed integers $t$, $\ell$ the  sphere sizes $\left|\sphere^{\ell}_t\right|$ can be computed efficiently if the values of $A_j$ are known with a dynamic program that uses formula \eqref{eq:exact_sphere_size}. This is demonstrated in~\cite[Algorithm 1]{puchinger2022generic} for specifically the sum-rank metric, but the used algorithm can easily be adjusted for other coordinate-additive metrics.
Similar to \eqref{eq:exact_sphere_size}, we can calculate the size $\card{\sphere^{\ell}_{t}}$ by summing over all weight decompositions of the weight $t$ and adding the number of tuples with that weight decomposition. From an information-theoretic point of view we use the notion of a \textit{type} of a tuple $x \in \Alphabet^{\ell}$. For any positive integers $\ell$ and $t$, the type of $x \in \Alphabet^{\ell}$ is a tuple $\theta(x) = (\theta_0(x), \ldots, \theta_{\mu}(x))$ where each $\theta_i(x)$ is the relative fraction of occurrences of a weight $i \in W_{\Alphabet}$ in $x$, i.e.,
\begin{align}
    \theta_i(x) = \frac{1}{\ell}\card{\set{k \in \set{1, \ldots, \ell} \st \AlphabetWeight(x_k) = i}}.
\end{align}
Notice that the weight decomposition of $x \in \Alphabet^{\ell}$ is immediately derived from its type $\theta(x)$ by $\ell \theta(x) = (\ell \theta_0(x), \ldots, \ell \theta_{\mu}(x))$ and indeed $\AlphabetWeight(x) = \ell \sum_{i = 0}^{\mu} \theta_i(x)$. Furthermore, if $\wt(x) = t$ then $ \set{0}^{\ell \theta_0} \times \ldots \times \set{\mu}^{\ell\theta_{\mu}} \in \mathcal{T}^{\ell}_t$. Hence, to calculate $\card{\sphere_t^{\ell}}$, instead of summing over all partitions in $\mathcal{T}_t^{\ell}$ we can sum over all types in $\Alphabet^{\ell}$ yielding weight $t$. We will denote this set by $\Theta_t^{\ell}$. The number of tuples $x \in \Alphabet^{\ell}$ of type $\theta$ is given by the multinomial coefficient
\begin{align}
    \binom{\ell}{\ell \theta} = \frac{\ell !}{(\ell \theta_0)! \cdot \ldots \cdot (\ell \theta_{\mu})!}.
\end{align}
Hence, the size of the $\ell$-dimensional sphere of radius $t$ can be expressed as
\begin{align}
    \card{\sphere_{t}^{\ell}} = \sum_{\ell\theta \in \Theta^{\ell}_{t}} \binom{\ell}{\ell \theta}\textcolor{review}{.}
\end{align}
This value, however, is hard to manipulate wherefore we might use the following upper and lower bounds (see \cite[Theorem 11.1.3]{cover2006elements})
\begin{align}\label{eq:bounds_multinomial}
    \frac{1}{(\ell + 1)^{\text{}^{\card{\Alphabet} - 1}}} 2^{\ell H(\theta)} \leq \binom{\ell}{\ell\theta} \leq 2^{\ell H(\theta)},
\end{align}
where $H(\theta) := -\sum_{i : \theta_i \neq 0} \theta_i \log_2(\theta_i)$ is the entropy of $\theta$.
With this we immediately obtain upper and lower bounds on $\card{\sphere_{t}^{\ell}}$. Regarding the upper bound, further improvements can be achieved which we will discuss in Section \ref{sec: information-theoretic}.

\subsection{Ordinary Generating Functions}
The theory of \acp{OGF} is an important branch of mathematics that lays connections between combinatorics, analysis, number theory, probability theory and other fields.
In this paper we restrict ourselves to \acp{OGF} corresponding to weights in coordinate-additive metrics, which are polynomials with non-negative coefficients.
Consider a finite abelian group $\Alphabet$ with weight function $\AlphabetWeight$ and induced coordinate-additive weight function $\SumAlphabetWeight$ on $\Alphabet^{\ell}$. The {\ac{OGF} corresponding to $\SumAlphabetWeight$ is defined as the polynomial
\begin{equation}\label{def:OGF}
    \textstyle F_{\Alphabet^{\ell}}(z) := \sum_{v \in \Alphabet^{\ell}} z^{\SumAlphabetWeight(v)} = \sum_{i=0}^{\mu\ell} |\mathcal{S}_i^{\ell}| \ z^i.
\end{equation}
For a polynomial $F(z) = F_0 + F_1 z + \ldots + F_d z^d$ we use the notation $[z^i]F(z)$ to refer to the $i$-th coefficient $F_i$ of $F(z)$, with $[z^i]F(z) = 0$ for $i > \deg(F)$.
The \ac{OGF} for the sum-rank metric on $\F_q^{m \times \eta \ell}$ is denoted by \vspace{-2mm} $$\sphere^{m,\eta,\ell,q}(z) = \sum_{i=0}^{\mu\ell} \card{\sphere_i^{m,\eta,\ell,q}} \, z^i .$$

\begin{definition}[Partial order on polynomials]
Let $F(z), G(z) \in \R[z]$ be two real polynomials. If $[z^i]F(z) \leq [z^i]G(z)$ for all $i \in \N$, we say $F(z)$ is coefficient-wise less-than-or-equal to $G(z)$, denoted as
	$F(z) \preccurlyeq_c G(z).$
\end{definition}

\begin{proposition}[{\cite[Theorem I.1]{flajolet2009analytic}}]
Let $\Alphabet_1$, $\Alphabet_2$ be two finite alphabets with weight functions $\wt_{\Alphabet_1},\wt_{\Alphabet_2}$ respectively. Then $\wt_{\Alphabet_1\times\Alphabet_2}(a,b) := \wt_{\Alphabet_1}(a)+\wt_{\Alphabet_2}(b)$ is a weight function on $\Alphabet_1 \times \Alphabet_2$ and
\begin{equation}
    F_{\Alphabet_1 \times \Alphabet_2}(z) = F_{\Alphabet_1}(z) F_{\Alphabet_2}(z). 
\end{equation}
In particular, we have
\begin{equation}
	F_{\Alphabet^{\ell}}(z) = F_{\Alphabet}(z)^{\ell}, \vspace{1mm}
\end{equation}
for $\ell \in \N$. Furthermore, the product of real polynomials with non-negative coefficients preserves the partial order: If $F(z) \preccurlyeq_c G(z)$ and $K(z) \preccurlyeq_c L(z)$, then $F(z)K(z) \preccurlyeq_c G(z)L(z)$.
\end{proposition}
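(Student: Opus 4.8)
The plan is to verify the three assertions in turn; each reduces to unwinding the relevant definition, so none of them is deep. I would start with the claim that $\wt_{\Alphabet_1\times\Alphabet_2}$ is a weight function on $\Alphabet_1\times\Alphabet_2$ by checking the three axioms from Section~\ref{subsec:coordinate-additive}. Since $\wt_{\Alphabet_1}$ and $\wt_{\Alphabet_2}$ take values in $\N$, the sum $\wt_{\Alphabet_1}(a)+\wt_{\Alphabet_2}(b)$ vanishes if and only if both summands vanish, i.e.\ if and only if $a=0$ and $b=0$, which is exactly the identity of $\Alphabet_1\times\Alphabet_2$; this gives axiom~1. Axiom~2 follows from $-(a,b)=(-a,-b)$ together with the symmetry of $\wt_{\Alphabet_1}$ and $\wt_{\Alphabet_2}$, and axiom~3 follows by applying the triangle inequality in each component and adding the two resulting inequalities.

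Next I would establish the product formula by expanding the defining sum of the generating function:
\[
    F_{\Alphabet_1\times\Alphabet_2}(z)=\sum_{(a,b)\in\Alphabet_1\times\Alphabet_2} z^{\wt_{\Alphabet_1}(a)+\wt_{\Alphabet_2}(b)}
    =\Bigl(\sum_{a\in\Alphabet_1} z^{\wt_{\Alphabet_1}(a)}\Bigr)\Bigl(\sum_{b\in\Alphabet_2} z^{\wt_{\Alphabet_2}(b)}\Bigr)
    =F_{\Alphabet_1}(z)\,F_{\Alphabet_2}(z),
\]
where the middle step is just distributivity, a finite double sum factoring into a product of two single sums. The special case $F_{\Alphabet^\ell}(z)=F_{\Alphabet}(z)^\ell$ is then immediate by induction on $\ell$, with base case $F_{\Alphabet^0}(z)=z^0=1=F_{\Alphabet}(z)^0$ and inductive step $F_{\Alphabet^{\ell+1}}(z)=F_{\Alphabet^\ell\times\Alphabet}(z)=F_{\Alphabet^\ell}(z)F_{\Alphabet}(z)=F_{\Alphabet}(z)^{\ell+1}$.

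For the last assertion, assume $F\preccurlyeq_c G$ and $K\preccurlyeq_c L$ with all four polynomials having non-negative coefficients. Fix $i\in\N$ and use the convolution formula $[z^i](FK)=\sum_{j=0}^{i}[z^j]F\cdot[z^{i-j}]K$. For each $j$ we have $0\le[z^j]F\le[z^j]G$ and $0\le[z^{i-j}]K\le[z^{i-j}]L$, and the elementary fact that $0\le a\le c$ and $0\le b\le d$ imply $ab\le cd$ yields $[z^j]F\cdot[z^{i-j}]K\le[z^j]G\cdot[z^{i-j}]L$. Summing over $j$ gives $[z^i](FK)\le[z^i](GL)$ for every $i$, i.e.\ $FK\preccurlyeq_c GL$.

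There is no genuine obstacle here; the only two points that deserve a moment of attention are that it is the $\N$-valuedness of the alphabet weights that makes the "zero weight" direction of axiom~1 work for the product alphabet, and that in the last part the non-negativity of the coefficients is precisely what licenses multiplying the two coefficient-wise inequalities term by term — without it, $FK$ and $GL$ need not be $\preccurlyeq_c$-comparable at all.
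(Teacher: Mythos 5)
Your proof is correct and is exactly the standard argument: the paper itself gives no proof of this proposition, citing it to Flajolet--Sedgewick, and your verification (the $\N$-valuedness argument for the axioms, the factorization of the double sum, induction for the $\ell$-th power, and the convolution formula with non-negative coefficients for the preservation of $\preccurlyeq_c$) is precisely the reasoning that citation stands in for. Nothing is missing; your closing remark correctly identifies non-negativity of the coefficients as the one hypothesis that cannot be dropped in the last part.
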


\begin{lemma}\label{lem: strictly increasing}
Let $F(z)$ be a real polynomial of degree $d > 0$ with non-negative coefficients $F_i \geq 0$ and first derivative $F'(z)$. If $F(z)$ is not a monomial, then the function $G(z) = z F'(z)/F(z)$ is a strictly increasing smooth function on the positive reals $\R_{> 0}$.
In particular if $F(0) > 0$, which is the case with \acp{OGF} of finite alphabets with weight functions, $G(z)$ is a bijection from $[0,\infty)$ to $[0,d)$.
\end{lemma}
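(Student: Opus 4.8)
The plan is to pass to the logarithmic variable $s$ via the substitution $z = e^{s}$ and to recognize $G(e^{s})$ as the first derivative of the log--partition function $\phi(s) := \log F(e^{s})$. Writing $F(z) = \sum_{i \in I} F_i z^{i}$ with $I = \set{i : F_i > 0}$ the (nonempty, finite) support, note first that $F(z) > 0$ for all $z \in \R_{>0}$ (all coefficients nonnegative, at least one positive and $z^i>0$), so $G(z) = zF'(z)/F(z)$ is a ratio of polynomials with non-vanishing denominator, hence smooth on $\R_{>0}$. A direct computation gives $\phi'(s) = e^{s}F'(e^{s})/F(e^{s}) = G(e^{s})$, so monotonicity of $G$ on $\R_{>0}$ is equivalent to monotonicity of $\phi'$ on $\R$.

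Next I would compute $\phi''(s)$ and identify it as a variance. For fixed $s$, introduce the Boltzmann-type probability distribution $\pi_s$ on $I$ with $\pi_s(i) = F_i e^{is}/F(e^{s})$ and let $X$ denote the random variable equal to $i$ with probability $\pi_s(i)$. Then $\phi'(s) = \mathbb{E}_{\pi_s}[X]$ and $\phi''(s) = \Var_{\pi_s}(X) \geq 0$, where nonnegativity is precisely the Cauchy--Schwarz inequality $\bigl(\sum_{i} i F_i e^{is}\bigr)^{2} \leq \bigl(\sum_{i} F_i e^{is}\bigr)\bigl(\sum_{i} i^{2} F_i e^{is}\bigr)$. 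The decisive point is the equality case: $\Var_{\pi_s}(X) = 0$ forces $X$ to be $\pi_s$-almost surely constant, i.e.\ $\card{I} = 1$, i.e.\ $F$ is a monomial --- contrary to hypothesis. Hence $\phi''(s) > 0$ for every $s \in \R$, so $\phi'$ is strictly increasing; since $s \mapsto e^{s}$ is an increasing bijection $\R \to \R_{>0}$, this yields that $G$ is strictly increasing (and smooth) on $\R_{>0}$.

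For the final assertion, assume $F(0) = F_0 > 0$. Then $0 \in I$, $F(z) > 0$ on all of $[0,\infty)$, so $G$ is smooth on $[0,\infty)$ with $G(0) = 0\cdot F'(0)/F(0) = 0$; being continuous at $0$ and strictly increasing on $(0,\infty)$, it is strictly increasing on $[0,\infty)$. Since $0 \in I$ we have $\pi_z(d) \leq 1 - \pi_z(0) < 1$, hence $G(z) = \mathbb{E}_{\pi_z}[X] < d$ for every finite $z \geq 0$; factoring out the leading term gives $G(z) = \frac{d F_d z^{d} + O(z^{d-1})}{F_d z^{d} + O(z^{d-1})} \to d$ as $z \to \infty$. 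A continuous, strictly increasing function on $[0,\infty)$ with $G(0)=0$ and supremum $d$ (not attained) maps $[0,\infty)$ bijectively onto $[0,d)$ by the intermediate value theorem, which is the claim.

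The only genuine obstacle is the equality analysis in the Cauchy--Schwarz step (equivalently, the fact that a random variable of zero variance is almost surely constant), combined with the bookkeeping observation that ``$F$ is not a monomial'' is exactly ``$\card{I} \geq 2$''. Everything else --- smoothness, the value $G(0)=0$, and the limit $G(z)\to d$ --- is routine.
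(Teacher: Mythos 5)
Your proof is correct, but it takes a genuinely different route from the paper. You substitute $z=e^{s}$, view $\phi(s)=\log F(e^{s})$ as a log-partition (cumulant generating) function, identify $G(e^{s})=\phi'(s)$ as the mean of the Boltzmann-type distribution $\pi_s$ on the support of the coefficients, and get strict monotonicity from $\phi''(s)=\Var_{\pi_s}(X)>0$, where strict positivity is the equality case of Cauchy--Schwarz together with the observation that ``not a monomial'' means the support has at least two elements. The paper instead proves monotonicity by a purely algebraic identity: for $0<a<b$ it expands $K(a,b)=bF'(b)F(a)-aF'(a)F(b)$ into the double sum $\sum_{i<j}F_iF_j(j-i)(b^ja^i-a^jb^i)$, which is visibly positive when some $F_iF_j>0$ with $i<j$, and then concludes from $G(b)-G(a)=K(a,b)/(F(a)F(b))$; the range $[0,d)$ is obtained, as in your argument, from the limit $G(z)\to d$ as $z\to\infty$. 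Your approach is more conceptual and dovetails nicely with the rest of the paper --- the distribution $\pi_s$ is exactly the Boltzmann distribution $P_\beta$ (with $z=q^{-\beta}$) of Definition~\ref{def: boltzmann distribution}, and the variance $\Var(\AlphabetWeight(X_\beta))$ reappears in the lower-bound section --- while the paper's argument is more elementary and self-contained (no calculus of $\log F$, no probabilistic language). Your handling of the endpoint behaviour ($G(0)=0$, $G(z)<d$ for finite $z$, intermediate value theorem) is also somewhat more explicit than the paper's, which leaves the bijection step largely implicit after computing the limit.
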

\vspace{-0.5cm}
\begin{proof}
Smoothness follows directly from smoothness of $F(z)$ and $1/z$ on $\R_{>0}$. To show that $G(z)$ is strictly increasing, consider two positive integers $0 < a < b$ and let $K(a,b):= bF'(b)F(a) - aF'(a)F(b)$. Then,
    \begin{align}
        K(a,b)  & = \sum_{i=0}^{d} i F_i b^i \sum_{j=0}^{d}  F_j a^j - \sum_{i=0}^{d} i F_i a^i \sum_{j=0}^{d}  F_j b^j \\
        &= \sum_{0 \leq i < j \leq d} i F_i F_j b^i a^j  + j F_j F_i b^j a^i\ - i F_i F_j a^i b^j  - j F_j F_i a^j b^i \\
        &= \sum_{0 \leq i < j \leq d} F_i F_j (j-i)( b^j a^i  -  a^j b^i) > 0
    \end{align}
    where the last inequality follows from assuming that $F(z)$ is not a monomial,  so there exist $i < j$ such that $F_i F_j > 0$. The proof now follows since $G(b) - G(a) = \frac{K(a,b)}{F(a)F(b)} > 0.$
Lastly, we have  that $\lim_{z \to \infty} F'(z)/z^{d-1} = d F_d$ and $\lim_{z \to \infty} F(z)/z^{d} = F_d$, hence we get $\lim_{z \to \infty} G(z) = d$.
\end{proof}


\section{Information-Theoretic Bounds on Spheres}\label{sec: information-theoretic}
In \cite{löliger1994upperbound} an asymptotically tight upper bound on the volume of an $\ell$-dimensional ball $\card{\B_{t}^{\ell}}$ of radius $t$ was introduced. The author used information theoretic tools, as described in Section \ref{subsec:computing_sphere_size}, and bounded the size of the sphere via the type maximizing the entropy, and showed that this bound is asymptotically tight. Furthermore, this bound is valid for any arbitrary additive weight function $\AlphabetWeight$ (see Section~\ref{subsec:coordinate-additive} for the definition) with respect to some finite abelian group $\Alphabet$.
The bound was proved to hold for normalized weights $\rho$ with $\rho :=  t/\ell$ up to the average weight $$ \overline{w} := |\Alphabet|^{-1} \sum_{a \in \Alphabet} \AlphabetWeight(a)$$ at which the volume of the ball is saturated.
We extend the result from~\cite{löliger1994upperbound} to the size of spheres and also prove that the bound holds for $\rho\geq\overline{w}$ up to the maximum possible weight, i.e., for $0 < \rho < \mu = \max_{a\in \A}\set{ \AlphabetWeight(a)}$. 

Given a random variable $X$ over a finite alphabet $\Alphabet$ with probability distribution $P$, we define $P(a) := \Pr(X=a)$ with $a \in \Alphabet$.
The entropy $H(P)$ of $P$ with respect to the base $q$ is defined as
\begin{align}\label{eq:entropy_q}
    H(P) := -\sum_{a \in \Alphabet,  P(a) \neq 0} P(a)\log_{q} P(a).
\end{align}
\begin{example}\vspace{-3mm}
Consider an alphabet $\Alphabet = \F_q$, a probability $\rho \in (0,1)$, and a probability distribution
\begin{align}
    P(x) = 
    \begin{cases}
        1 - \rho & x = 0\\
        \rho/(q-1) & x \neq 0
    \end{cases}.
\end{align}
\vspace{0.3cm}
Then the entropy $H(X)$ reduces to the $q$-ary entropy function defined as
$$h_{q}(\rho) := - \rho  \log_{q} (\rho/(q-1)) -  (1-\rho)\log_{q} (1-\rho).$$
\end{example}

\begin{definition}\label{def: boltzmann distribution}
For any $a \in \Alphabet$ and $0 < \rho < \mu$, we define the probability distribution
\begin{align}\label{equ:boltzmann_distribution}
    P_{\beta}(a) := \frac{q^{-\boltzmParam \AlphabetWeight(a)}}{\PartitionFunction(\boltzmParam)}
\end{align}
where $\boltzmParam$ is the unique solution to the weight constraint
\vspace{0.1cm}
\begin{align}\label{eq:weightconstraint}
    \mathbb{E}[\AlphabetWeight(a) ] = \sum_{a \in \Alphabet} P_{\beta}(a)\AlphabetWeight(a) = \rho
\end{align}
and $\PartitionFunction(\beta)$ is chosen s.t. $\sum_{a \in \Alphabet} P_{\beta}(a)=1$, i.e. $\PartitionFunction(\beta) = \sum_{a \in \Alphabet} q^{-\beta\AlphabetWeight(a)}.$
\end{definition}

Later, in Proposition \ref{prop: rho beta bijection}, we show that for any $0 < \rho < \mu$\textcolor{review}{,} a  solution $\beta$ to the equation $\sum_{a \in \Alphabet} P_{\beta}(a)\AlphabetWeight(a) = \rho$ always exists and is unique. This establishes a one-to-one correspondence between $\rho$ and $\beta$, i.e. every $\beta \in \mathbb{R}$ determines a $\rho \in (0,\mu)$ via the function 
\begin{equation}\label{eq: rho(beta)}
  \rho(\beta) :=  \sum_{a \in \Alphabet} P_{\beta}(a)\AlphabetWeight(a)
\end{equation}
and vice versa.
Hence we will use $\rho$ and $\rho(\beta)$ interchangeably.
Let us denote by $$H_{\rho} := H(P_{\beta})$$ the entropy of the distribution in \eqref{equ:boltzmann_distribution}. Then, the following bound was proven in \cite{löliger1994upperbound}.
\begin{theorem}[\cite{löliger1994upperbound}]\label{thm:bound_loeliger}
For any  $0 < \rho \leq  \overline{w}$ and $\ell \in \mathbb{N}$ we have
    \begin{equation}
        \frac{1}{\ell} \log_q \card{\B_{\rho \ell}^{\ell}} \leq H_{\rho}.
    \end{equation}
\end{theorem}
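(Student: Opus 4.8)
The plan is to combine a Chernoff--Markov type estimate on the ball size with the observation that the entropy $H_\rho$ of the Boltzmann distribution $P_\beta$ admits a clean closed form. Throughout, write $\beta$ for the parameter attached to $\rho$ by the weight constraint \eqref{eq:weightconstraint}, and recall $\PartitionFunction(\beta) = \sum_{a\in\Alphabet} q^{-\beta\AlphabetWeight(a)} = F_{\Alphabet}(q^{-\beta})$.

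First I would pin down the sign of $\beta$: the hypothesis $0<\rho\le\overline{w}$ should be shown equivalent to $\beta\ge 0$. Setting $z=q^{-\beta}$, the weight constraint \eqref{eq:weightconstraint} rewrites as $\rho = z F_{\Alphabet}'(z)/F_{\Alphabet}(z) = G(z)$ in the notation of Lemma~\ref{lem: strictly increasing}, since $zF_{\Alphabet}'(z) = \sum_{a}\AlphabetWeight(a)z^{\AlphabetWeight(a)}$. Because $G$ is strictly increasing on $\R_{>0}$ and $G(1) = F_{\Alphabet}'(1)/F_{\Alphabet}(1) = |\Alphabet|^{-1}\sum_{a}\AlphabetWeight(a) = \overline{w}$, the condition $\rho\le\overline{w}$ is equivalent to $z = q^{-\beta}\le 1$, i.e.\ $\beta\ge 0$. (The bijectivity in Lemma~\ref{lem: strictly increasing}, equivalently Proposition~\ref{prop: rho beta bijection}, also guarantees such a $\beta$ exists and is unique.)

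Next comes the Chernoff step. For any $\beta\ge 0$ and any real $t\ge 0$, on the set $\{v\in\Alphabet^{\ell}:\SumAlphabetWeight(v)\le t\}$ we have $q^{\beta(t-\SumAlphabetWeight(v))}\ge 1$, hence
\[ \card{\B_t^{\ell}} \;\le\; \sum_{v\in\Alphabet^{\ell}} q^{\beta(t-\SumAlphabetWeight(v))} \;=\; q^{\beta t}\Bigl(\sum_{a\in\Alphabet} q^{-\beta\AlphabetWeight(a)}\Bigr)^{\ell} \;=\; q^{\beta t}\,\PartitionFunction(\beta)^{\ell}, \]
where the factorization uses coordinate-additivity of $\SumAlphabetWeight$ (equivalently $F_{\Alphabet^{\ell}} = F_{\Alphabet}^{\ell}$). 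Taking $t=\rho\ell$ and applying $\tfrac1\ell\log_q$ gives $\tfrac1\ell\log_q\card{\B_{\rho\ell}^{\ell}} \le \beta\rho + \log_q\PartitionFunction(\beta)$.

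Finally I would identify the right-hand side with $H_\rho$: from $P_\beta(a) = q^{-\beta\AlphabetWeight(a)}/\PartitionFunction(\beta)$ one computes
\[ H_\rho = H(P_\beta) = -\sum_{a} P_\beta(a)\bigl(-\beta\AlphabetWeight(a)-\log_q\PartitionFunction(\beta)\bigr) = \beta\sum_{a}P_\beta(a)\AlphabetWeight(a) + \log_q\PartitionFunction(\beta) = \beta\rho + \log_q\PartitionFunction(\beta), \]
the last equality being the weight constraint \eqref{eq:weightconstraint}; combined with the previous display this is exactly the claim. The only delicate point is the first step: the whole argument rests on $\beta\ge 0$, since for $\beta<0$ the factor $q^{\beta(t-\SumAlphabetWeight(v))}$ drops below $1$ on part of the ball and the estimate breaks --- which is precisely why the hypothesis $\rho\le\overline{w}$ is needed. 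Steps two and three are routine. I note that the same estimate in fact bounds $\card{\S_{\rho\ell}^{\ell}}$, and the mirror-image argument with $\beta\le 0$ (i.e.\ $\rho\ge\overline{w}$) covers the complementary range of radii, which should be how the extension to all $0<\rho<\mu$ announced before the theorem is obtained.
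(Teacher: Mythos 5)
Your proof is correct, and it takes essentially the same route as the source: the paper only cites Loeliger for this theorem, and your Chernoff-type evaluation $\card{\B_{\rho\ell}^{\ell}} \le q^{\beta\rho\ell}\,\PartitionFunction(\beta)^{\ell}$ at $y=q^{-\beta}$, combined with the identity $H_\rho = \beta\rho + \log_q \PartitionFunction(\beta)$ (Lemma~\ref{lem: in terms of generating function}), is exactly the entropy/saddle-point argument the paper itself deploys for the sphere version in Theorem~\ref{thm:upperbound_sphere_saddlepoint}. You also correctly isolate the one point where the ball bound differs from the sphere bound, namely that $\rho\le\overline{w}\iff\beta\ge 0$ is needed so that $q^{\beta(t-\SumAlphabetWeight(v))}\ge 1$ on the ball, which is precisely why the theorem is stated only for $0<\rho\le\overline{w}$.
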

The following is an immediate consequence of Theorem \ref{thm:bound_loeliger} above.
\begin{corollary}\label{cor:bound_loeliger}
    For any  $0 < \rho < \overline{w}$ and $\ell \in \mathbb{N}$ we have 
   
    \begin{equation}
        \frac{1}{\ell} \log_q \card{\sphere_{\rho \ell}^{\ell}} \leq H_{\rho}.
    \end{equation}
\end{corollary}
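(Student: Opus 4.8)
The plan is to derive Corollary \ref{cor:bound_loeliger} directly from Theorem \ref{thm:bound_loeliger} using only the trivial inclusion $\S_{\rho\ell}^\ell \subseteq \B_{\rho\ell}^\ell$. Indeed, since every vector of weight exactly $\rho\ell$ is in particular a vector of weight at most $\rho\ell$, we have $\card{\S_{\rho\ell}^\ell} \leq \card{\B_{\rho\ell}^\ell}$. Applying the monotone function $\frac{1}{\ell}\log_q(\cdot)$ and then chaining with the bound of Theorem \ref{thm:bound_loeliger} gives
\begin{equation}
\frac{1}{\ell}\log_q\card{\S_{\rho\ell}^\ell} \;\leq\; \frac{1}{\ell}\log_q\card{\B_{\rho\ell}^\ell} \;\leq\; H_\rho,
\end{equation}
which is exactly the claim. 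The restriction $0 < \rho < \overline{w}$ in the corollary is no obstacle, as it is a subinterval of the range $0 < \rho \leq \overline{w}$ for which Theorem \ref{thm:bound_loeliger} is stated; one subtlety to note is that $\rho\ell$ must be an integer for $\S_{\rho\ell}^\ell$ to be a meaningful radius, but this is already implicitly required for the statement to make sense and is inherited from the setup of Theorem \ref{thm:bound_loeliger}.

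The only place where care is genuinely needed — and hence the ``main obstacle,'' though it is a mild one — is the edge case where $\card{\S_{\rho\ell}^\ell} = 0$, i.e.\ no vector attains weight exactly $\rho\ell$. In that situation $\log_q 0 = -\infty$, so the left-hand side is $-\infty$ and the inequality holds vacuously; alternatively one can simply restrict attention to radii that are actually attained, which is the convention in this context. Since $H_\rho \geq 0$ always (it is the entropy of a genuine probability distribution), the bound is never violated. Thus the corollary follows immediately, and no new ideas beyond the set inclusion and monotonicity of the logarithm are required.
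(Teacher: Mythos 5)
Your proof is correct and matches the paper's (implicit) argument: the paper states the corollary as an immediate consequence of Theorem \ref{thm:bound_loeliger}, precisely via the inclusion $\S_{\rho\ell}^\ell \subseteq \B_{\rho\ell}^\ell$ and monotonicity of the logarithm. Your remarks on the edge cases are fine but not needed beyond this.
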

By a simple cutting argument, the bound in Theorem \ref{thm:bound_loeliger} can be extended to any normalized weight $0 \leq \rho \leq \max_{x\in \A}\set{ \AlphabetWeight(x)}=: \mu$ as follows.
\begin{theorem}
    For any  $0 < \rho \leq  \mu$ we have 
    \begin{align}
        \frac{1}{\ell} \log_q \card{\mathcal{B}_{\rho \ell}^{\ell}} 
        \leq
        \begin{cases}
            H_{\rho} & \text{if } 0 < \rho \leq  \overline{w}\\
            \log_q(\card{\A}) & \text{if } \overline{w} < \rho \leq  \mu
        \end{cases}.
    \end{align}
\end{theorem}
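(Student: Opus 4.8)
The plan is to split along the two regimes of the piecewise bound. For $0 < \rho \le \overline{w}$ there is nothing new to prove: this is verbatim the statement of Theorem~\ref{thm:bound_loeliger} (which already includes the endpoint $\rho = \overline{w}$ and holds for every $\ell \in \mathbb{N}$). Hence the entire content of the extension lies in the range $\overline{w} < \rho \le \mu$, and for this range a one-line volume estimate suffices.

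For $\overline{w} < \rho \le \mu$ I would argue as follows. By definition $\B_{\rho\ell}^{\ell} = \SumAlphabetWeight^{-1}(\{0,1,\dots,\floor{\rho\ell}\})$, which is obtained from the full space $\Alphabet^{\ell}$ by \emph{cutting away} every tuple whose sum-weight exceeds $\rho\ell$; in particular $\B_{\rho\ell}^{\ell} \subseteq \Alphabet^{\ell}$, so $\card{\B_{\rho\ell}^{\ell}} \le \card{\Alphabet^{\ell}} = \card{\A}^{\ell}$. Applying the monotone map $x \mapsto \tfrac{1}{\ell}\log_q x$ to both sides yields $\tfrac{1}{\ell}\log_q\card{\B_{\rho\ell}^{\ell}} \le \log_q\card{\A}$, which is exactly the second branch. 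The quantity $\rho\ell$ need not be an integer, but this is harmless since $\B_{\rho\ell}^{\ell} = \B_{\floor{\rho\ell}}^{\ell} \subseteq \Alphabet^{\ell}$ in any case.

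It is worth recording why the cut is placed precisely at $\rho = \overline{w}$: setting $\beta = 0$ in Definition~\ref{def: boltzmann distribution} gives the uniform law $P_0(a) = \card{\A}^{-1}$, whose expected weight is $\card{\A}^{-1}\sum_{a\in\A}\AlphabetWeight(a) = \overline{w}$ and whose $q$-ary entropy is $\log_q\card{\A}$; by the uniqueness in Proposition~\ref{prop: rho beta bijection} this forces $H_{\overline{w}} = \log_q\card{\A}$, so the two branches agree at the threshold and the piecewise bound is continuous there. There is essentially no hard step in the argument; the only point requiring care is not to over-claim. For $\rho > \overline{w}$ the value $H_\rho$ is strictly smaller than $\log_q\card{\A}$, yet it cannot replace $\log_q\card{\A}$ in the bound: a routine concentration estimate on the sum $\SumAlphabetWeight$ of i.i.d.\ coordinates shows that when $\rho > \overline{w}$ the ball $\B_{\rho\ell}^{\ell}$ already contains a constant fraction of $\Alphabet^{\ell}$, whence $\tfrac{1}{\ell}\log_q\card{\B_{\rho\ell}^{\ell}} \to \log_q\card{\A}$ as $\ell \to \infty$. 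Thus the second branch is asymptotically tight in this range, and combining it with Theorem~\ref{thm:bound_loeliger} establishes the full statement.
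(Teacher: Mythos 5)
Your proof is correct and is essentially the paper's own argument: the first branch is verbatim Theorem~\ref{thm:bound_loeliger}, and the second branch is exactly the ``simple cutting argument'' the paper alludes to, namely the trivial inclusion $\B_{\rho\ell}^{\ell} \subseteq \A^{\ell}$ giving $\card{\B_{\rho\ell}^{\ell}} \leq \card{\A}^{\ell}$, i.e.\ saturation of the ball size beyond the average weight $\overline{w}$. Your additional remarks (continuity of the two branches at $\rho = \overline{w}$ via $\beta = 0$, and the asymptotic tightness showing $H_{\rho}$ cannot serve as a ball bound for $\rho > \overline{w}$) are correct but not required for the statement.
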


\subsection{Upper Bound}
In this subsection, we show that Corollary \ref{cor:bound_loeliger} also holds for normalized weights $0 < \rho < \mu$.
Recall the \acp{OGF} for the alphabets $\A$ and $\A^{\ell}$, respectively,
\begin{align}
    F_\Alphabet(z) = \sum_{a \in \Alphabet} z^{\AlphabetWeight(a)}\quad \text{and } \quad F_{\Alphabet^\ell} (z) = \sum_{v \in \Alphabet^\ell} z^{\SumAlphabetWeight(v)} = F_\Alphabet(z)^\ell .
\end{align}
We now can express $\PartitionFunction(\beta)$, $\rho(\beta)$ and $H_{\rho}$ in terms of $F_\Alphabet(z)$.
\begin{lemma}\label{lem: in terms of generating function}
    Let  $\beta \in \R$ and $\rho = \rho(\beta)$. Then 
    \begin{align}\label{eq:rho(beta)}
        \PartitionFunction(\beta) = F_\Alphabet\left(q^{-\beta}\right), \quad
        \rho(\beta) = q^{-\beta} \frac{F'_\Alphabet\left(q^{-\beta}\right)}{F_\Alphabet\left(q^{-\beta}\right)}, \quad
        H_{\rho} = \log_q\left(\frac{F_\Alphabet(q^{-\beta})}{q^{-\beta\rho(\beta)}}\right).
    \end{align}
\end{lemma}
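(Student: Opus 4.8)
The plan is to verify the three identities one by one, in the stated order, purely by substituting into the definitions; no nontrivial analysis is needed.

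\emph{Partition function.} Since $F_\Alphabet(z) = \sum_{a \in \Alphabet} z^{\AlphabetWeight(a)}$, evaluating at $z = q^{-\beta}$ turns each monomial $z^{\AlphabetWeight(a)}$ into $q^{-\beta \AlphabetWeight(a)}$, so $F_\Alphabet(q^{-\beta}) = \sum_{a \in \Alphabet} q^{-\beta \AlphabetWeight(a)}$, which is exactly $\PartitionFunction(\beta)$ by Definition \ref{def: boltzmann distribution}.

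\emph{The weight $\rho(\beta)$.} Starting from \eqref{eq: rho(beta)} and inserting \eqref{equ:boltzmann_distribution} gives $\rho(\beta) = \PartitionFunction(\beta)^{-1}\sum_{a \in \Alphabet}\AlphabetWeight(a)\,q^{-\beta\AlphabetWeight(a)}$. I would then observe that, differentiating $F_\Alphabet$ termwise, $z F'_\Alphabet(z) = \sum_{a \in \Alphabet}\AlphabetWeight(a)\,z^{\AlphabetWeight(a)}$, so the numerator equals $q^{-\beta}F'_\Alphabet(q^{-\beta})$; combining this with the first identity to replace $\PartitionFunction(\beta)$ by $F_\Alphabet(q^{-\beta})$ yields the claim. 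This also identifies $\rho$, viewed as a function of $z = q^{-\beta}$, with the function $G(z) = zF'_\Alphabet(z)/F_\Alphabet(z)$ of Lemma \ref{lem: strictly increasing}, which is the link used later in Proposition \ref{prop: rho beta bijection}.

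\emph{The entropy $H_\rho$.} First note that $P_\beta(a) = q^{-\beta\AlphabetWeight(a)}/\PartitionFunction(\beta) > 0$ for every $a \in \Alphabet$, so the entropy sum in \eqref{eq:entropy_q} runs over all of $\Alphabet$. Expanding $\log_q P_\beta(a) = -\beta\AlphabetWeight(a) - \log_q \PartitionFunction(\beta)$ and using $\sum_{a\in\Alphabet} P_\beta(a) = 1$ together with the weight constraint \eqref{eq:weightconstraint} gives $H_\rho = \beta\rho(\beta) + \log_q \PartitionFunction(\beta)$. Rewriting $\beta\rho(\beta) = -\log_q q^{-\beta\rho(\beta)}$, substituting $\PartitionFunction(\beta) = F_\Alphabet(q^{-\beta})$ from the first part, and merging the two logarithms produces $H_\rho = \log_q\!\big(F_\Alphabet(q^{-\beta})/q^{-\beta\rho(\beta)}\big)$, as required.

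There is no real obstacle here: the whole proof is bookkeeping. The only points worth a sentence are the positivity of $P_\beta$ (so the entropy is a genuine sum over $\Alphabet$) and the fact that the middle identity is literally a restatement of the weight constraint, so this lemma does not presuppose existence or uniqueness of $\beta$ — that is handled separately in Proposition \ref{prop: rho beta bijection}.
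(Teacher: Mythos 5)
Your proposal is correct and follows essentially the same route as the paper: evaluate $F_\Alphabet$ at $q^{-\beta}$, identify $\sum_a \AlphabetWeight(a)\,q^{-\beta\AlphabetWeight(a)}$ with $q^{-\beta}F'_\Alphabet(q^{-\beta})$, and expand the entropy to get $H_\rho = \beta\rho(\beta) + \log_q F_\Alphabet(q^{-\beta})$ before merging logarithms. Your explicit remarks on the positivity of $P_\beta$ and on the lemma not presupposing uniqueness of $\beta$ are small but welcome clarifications beyond what the paper states.
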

\vspace{-5mm}
\begin{proof}
    The first equality immediately follows from the fact that $\PartitionFunction(\boltzmParam)$ is the normalization constant of the distribution $P_\beta(\cdot)$: $\PartitionFunction(\boltzmParam) = \sum_{a \in \Alphabet} \left(q^{-\beta}\right)^{\AlphabetWeight(a)} = F_{\Alphabet}(q^{-\beta})$.
    
    For the second equality, we rewrite the weight constraint \eqref{eq:weightconstraint} defining $\rho(\beta)$ to obtain 
    \begin{align}
        \rho(\beta) &= \sum_{a \in \Alphabet} P_{\beta}(a) \AlphabetWeight(a)= \sum_{a \in \Alphabet} \frac{q^{-\beta \AlphabetWeight(a)}}{\PartitionFunction(\beta)} \AlphabetWeight(a)
            = \sum_{a \in \Alphabet} \frac{q^{-\beta (\AlphabetWeight(a) - 1)} q^{-\beta} \AlphabetWeight(a)}{F_{\Alphabet}(q^{-\beta})}
    \end{align}
    Noticing that $F'_\Alphabet\left(q^{-\beta}\right) = \sum_{a \in \Alphabet} q^{-\beta (\AlphabetWeight(a) - 1)}\AlphabetWeight(a)$ yields the result.

    To show the last equality we use equations \eqref{equ:boltzmann_distribution}, \eqref{eq: rho(beta)} and the previous equalities for $\PartitionFunction(\beta)$. Rewriting the definition of the entropy $H_{\rho}$ gives
\vspace{-3mm}
    \begin{align}
        H_{\rho} 
        &= -\sum_{a \in \Alphabet} P_{\beta}(a) \log_q \left( P_{\beta}(a) \right) = -\sum_{a \in \Alphabet} P_{\beta}(a) \log_q \left( \frac{q^{-\beta\AlphabetWeight(a)}}{F_{\Alphabet}(q^{-\beta})} \right)\\
        &= \sum_{a \in \Alphabet} P_{\beta}(a)\left(\beta \AlphabetWeight(a) + \log_q\left( F_{\Alphabet}(q^{-\beta}) \right)\right) \\
         &= \beta \sum_{a \in \Alphabet} P_{\beta}(a) \AlphabetWeight(a) \ + \ \log_q\left( F_{\Alphabet}(q^{-\beta}) \right)\sum_{a \in \Alphabet} P_{\beta}(a)\\
       	& = \beta\rho(\beta) + \log_q\left(F_{\Alphabet}(q^{-\beta})\right)\\
        &= \log_q\left( \frac{F_\Alphabet(q^{-\beta})}{q^{-\beta\rho(\beta)}} \right).
    \end{align}
\end{proof}

\begin{proposition}\label{prop: rho beta bijection}
Let $\Alphabet$ be an alphabet with weight function $\AlphabetWeight$. Then the function $\rho(\beta)$ given in~\eqref{eq: rho(beta)} is a strictly decreasing smooth bijection from $\mathbb{R}$ to $(0,\mu)$.   
\end{proposition}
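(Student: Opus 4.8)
The plan is to establish the claim by working through the expression for $\rho(\beta)$ in Lemma~\ref{lem: in terms of generating function}, namely $\rho(\beta) = q^{-\beta} F'_\Alphabet(q^{-\beta})/F_\Alphabet(q^{-\beta})$, and reducing the statement to properties of the function $G(z) = zF'_\Alphabet(z)/F_\Alphabet(z)$ studied in Lemma~\ref{lem: strictly increasing}. First I would observe that the substitution $z = q^{-\beta}$ defines a strictly decreasing smooth bijection from $\R$ (the $\beta$-axis) onto $(0,\infty)$ (the $z$-axis), with $\beta \to -\infty$ corresponding to $z \to \infty$ and $\beta \to +\infty$ corresponding to $z \to 0^+$. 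Since $\rho(\beta) = G(q^{-\beta}) = (G \circ (\beta \mapsto q^{-\beta}))$, the composition of a strictly increasing bijection $G$ with a strictly decreasing bijection is a strictly decreasing bijection, and composition of smooth functions is smooth, so the structural part of the claim follows immediately from Lemma~\ref{lem: strictly increasing} once we know its hypotheses apply.

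The key remaining point is to check that Lemma~\ref{lem: strictly increasing} is applicable to $F_\Alphabet(z)$, i.e. that $F_\Alphabet$ is a real polynomial of degree $d = \mu > 0$ with non-negative coefficients, with $F_\Alphabet(0) > 0$, and is not a monomial. Non-negativity of coefficients and $F_\Alphabet(0) = \card{\{a : \AlphabetWeight(a) = 0\}} = 1 > 0$ are immediate from the definition $F_\Alphabet(z) = \sum_{a \in \Alphabet} z^{\AlphabetWeight(a)}$ (using axiom~1 of the weight function, only $a = 0$ has weight $0$). The degree of $F_\Alphabet$ is exactly $\mu = \max_{a \in \Alphabet} \AlphabetWeight(a)$, which is positive because $\Alphabet$ is a nontrivial abelian group (there is some $a \neq 0$, hence some positive weight); and $F_\Alphabet$ is not a monomial precisely because it has the constant term $1$ together with a term of positive degree $\mu$. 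Then Lemma~\ref{lem: strictly increasing} gives that $G$ is a strictly increasing smooth bijection from $[0,\infty)$ onto $[0,d) = [0,\mu)$; restricting to $(0,\infty)$ it is a bijection onto $(G(0), \mu) = (0,\mu)$ since $G(0) = 0$ (as $F_\Alphabet(0) > 0$ forces $G(0) = 0 \cdot F'_\Alphabet(0)/F_\Alphabet(0) = 0$).

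Putting the pieces together: $\rho(\beta) = G(q^{-\beta})$, the inner map $\beta \mapsto q^{-\beta}$ is a smooth strictly decreasing bijection $\R \to (0,\infty)$, and $G : (0,\infty) \to (0,\mu)$ is a smooth strictly increasing bijection, so $\rho(\beta)$ is a smooth strictly decreasing bijection $\R \to (0,\mu)$, as claimed. As a byproduct this confirms the well-posedness asserted after Definition~\ref{def: boltzmann distribution}: for each $0 < \rho < \mu$ there is a unique $\beta \in \R$ with $\rho(\beta) = \rho$. I do not anticipate a serious obstacle here; the only point requiring a little care is the bookkeeping of which endpoint of $\R$ maps to which endpoint of $(0,\mu)$ under the composition (so as to get the limits $\rho(\beta) \to \mu$ as $\beta \to -\infty$ and $\rho(\beta) \to 0$ as $\beta \to +\infty$), and confirming $G(0) = 0$ to pin down the open interval $(0,\mu)$ rather than $[0,\mu)$.
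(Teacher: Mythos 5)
Your proof is correct and follows essentially the same route as the paper: both express $\rho(\beta) = G(q^{-\beta})$ with $G(z) = zF'_\Alphabet(z)/F_\Alphabet(z)$ and invoke Lemma~\ref{lem: strictly increasing} to conclude, with your version merely spelling out the hypothesis checks and endpoint bookkeeping that the paper leaves implicit.
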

\vspace{-8mm}
\begin{proof}
By Lemma \ref{lem: strictly increasing}, the function $G(z) := z \frac{F'_\Alphabet\left(z\right)}{F_\Alphabet\left(z\right)}$  is a strictly increasing bijection from $[0,\infty)$ to $[0,\mu)$. The proof now follows since $\rho(\beta) = G(q^{-\beta})$. 
\end{proof}

We will now \textcolor{review}{use the technique of bounding the coefficients of an \ac{OGF} by evaluating it at points $y$.}
For any real valued $y > 0$ we have
\begin{align}\label{eq:saddle-point}
    |\sphere_t^\ell| y^t &= \left([z^t]F_{\Alphabet^\ell}(z)\right) y^t \leq F_{\Alphabet^\ell}(y) = F_\Alphabet(y)^\ell.
\end{align}
We can further rewrite this expression and take the infimum on the right-hand side and obtain \vspace{-2mm}
\begin{align}\label{eq:saddle-point-alt}
    \frac{1}{\ell} \log_q |\sphere_t^\ell|  &\leq \inf_{y > 0}\ \log_q\left(\frac{F_\Alphabet(y)}{y^\rho}\right).
\end{align}
Moreover, we can show that a global minimum of $F_\Alphabet(y)/y^\rho$ exists and therefore the infimum is a minimum: by setting the derivative of $F_\Alphabet(y)/y^\rho$ to zero and using \eqref{eq:rho(beta)} for $\rho$, we obtain a local minimum for $y = q^{-\beta}$. Then, using Lemma~\ref{lem: strictly increasing}, we can show that the derivative of $F_\Alphabet(y)/y^\rho$ is negative for $0 < y < q^{-\beta}$ and positive for $y > q^{-\beta}$.
Therefore, the local minimum is also the global minimum, where the function $\log_q\left(\frac{F_\Alphabet(y)}{y^\rho}\right)$ takes the value $H_\rho$ (cf. \eqref{eq:rho(beta)}).  

\textcolor{review}{This technique is similar to the more general \textit{saddle-point bound} explained in~\cite[Section VIII.2]{flajolet2009analytic}.}
To summarize, the saddle-point-like bound \eqref{eq:saddle-point-alt} coincides with the entropy bound (see also~\cite[Theorem 4.1]{GREFERATH200411},~\cite[Theorem IV.9]{byrne2021fundamental}), but extends the range of $\rho$ to $(0,\mu)$, as stated in Theorem \ref{thm:upperbound_sphere_saddlepoint}.
Finally, we have 
\begin{align}
    \frac{1}{\ell} \log_q |\sphere_t^\ell|  &\leq \log_q\left(\frac{F_\Alphabet(q^{-\beta})}{(q^{-\beta})^\rho}\right).
\end{align}

\begin{theorem}\label{thm:upperbound_sphere_saddlepoint}
    For any  $0 < \rho < \mu$ and $\ell \in \mathbb{N}$ we have
    \begin{equation}
        \frac{1}{\ell} \log_q |\sphere_{\rho \ell}^\ell| \, \leq \, H_{\rho} \, = \, \inf_{y > 0}\ \log_q\left(\frac{F_\Alphabet(y)}{y^\rho}\right),
    \end{equation}
    where the infimum (in this case a minimum) on the right-hand side is attained at $y = q^{-\beta}$, with $\beta$ defined by $\rho(\beta) = \rho$.
\end{theorem}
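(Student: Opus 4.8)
The plan is to derive the statement directly from the saddle-point inequality \eqref{eq:saddle-point-alt}, which has already been established for every real $y>0$, and then to identify the infimum on the right-hand side with $H_\rho$ by locating the global minimizer explicitly. First I would fix $0<\rho<\mu$ and apply Proposition \ref{prop: rho beta bijection} to obtain the unique $\beta\in\R$ with $\rho(\beta)=\rho$; set $y_0:=q^{-\beta}>0$. The goal is to show that $y\mapsto F_\Alphabet(y)/y^\rho$ attains its global minimum on $\R_{>0}$ at $y=y_0$, and that the minimum value equals $H_\rho$; then \eqref{eq:saddle-point-alt} immediately gives $\tfrac1\ell\log_q|\S_{\rho\ell}^\ell|\le H_\rho$.

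For the minimization, I would work with $\varphi(y):=\log_q\big(F_\Alphabet(y)/y^\rho\big)=\log_q F_\Alphabet(y)-\rho\log_q y$, which is smooth on $\R_{>0}$ since $F_\Alphabet(0)>0$. Differentiating, $\varphi'(y)$ vanishes exactly when $y F_\Alphabet'(y)/F_\Alphabet(y)=\rho$, i.e. when $G(y)=\rho$ in the notation of Lemma \ref{lem: strictly increasing}. Since $F_\Alphabet$ is not a monomial (every finite alphabet has $F_\Alphabet(0)=|\AlphabetWeight^{-1}(0)|>0$ and $\deg F_\Alphabet=\mu>0$), Lemma \ref{lem: strictly increasing} tells us $G$ is strictly increasing and bijective from $[0,\infty)$ to $[0,\mu)$; hence for $\rho\in(0,\mu)$ there is a unique critical point, namely $y=y_0$ (using the second identity of \eqref{eq:rho(beta)}, which says precisely $G(q^{-\beta})=\rho(\beta)=\rho$). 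Monotonicity of $G$ further gives $\varphi'(y)=(\ln q)^{-1}(G(y)-\rho)/y<0$ for $0<y<y_0$ and $>0$ for $y>y_0$, so $y_0$ is the unique global minimizer. Evaluating, $\varphi(y_0)=\log_q\big(F_\Alphabet(q^{-\beta})/(q^{-\beta})^\rho\big)$, which is exactly $H_\rho$ by the third identity in Lemma \ref{lem: in terms of generating function}. Combining with \eqref{eq:saddle-point-alt} finishes the proof.

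The only mild subtlety — and the thing worth stating carefully rather than an actual obstacle — is the degenerate case where $F_\Alphabet$ is a monomial, which would break Lemma \ref{lem: strictly increasing}; but this cannot happen for an OGF of a finite alphabet with a weight function, because the identity element contributes the term $z^0$, so $F_\Alphabet(0)\ge 1$, and there exists some $a$ with $\AlphabetWeight(a)=\mu>0$, so $F_\Alphabet$ has at least two distinct monomials. A second point to record is that $y=y_0$ genuinely lies in the open interval $(0,\infty)$ over which the saddle-point bound \eqref{eq:saddle-point} is valid, so no boundary behaviour needs to be analyzed; the limits $\lim_{y\to 0^+}\varphi(y)=+\infty$ (since $-\rho\log_q y\to+\infty$ and $\log_q F_\Alphabet(y)\to\log_q F_\Alphabet(0)$ is finite) and $\lim_{y\to\infty}\varphi(y)=+\infty$ (since $F_\Alphabet(y)\sim F_\mu y^\mu$ and $\mu>\rho$) are consistent with the interior minimum but are not even needed once the sign analysis of $\varphi'$ is in hand. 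Everything else is the routine differentiation already sketched in the paragraph preceding the theorem in the excerpt.
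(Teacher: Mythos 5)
Your proposal is correct and follows essentially the same route as the paper: the saddle-point bound \eqref{eq:saddle-point-alt}, the sign analysis of the derivative of $F_\Alphabet(y)/y^\rho$ via the monotonicity of $G$ from Lemma \ref{lem: strictly increasing} to locate the unique global minimizer at $y=q^{-\beta}$, and Lemma \ref{lem: in terms of generating function} to identify the minimum value with $H_\rho$. The extra remarks on the non-monomial case and boundary behaviour are fine but not needed beyond what the paper already sketches.
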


\subsection{Lower Bounds}
\noindent
Observe that \eqref{eq:bounds_multinomial} gives us the lower bound
\begin{equation}\label{eq:types_lower_bound}
    \frac{1}{\ell} \log_q |\sphere_{\rho \ell}^\ell| \geq H_\rho - (\card{\Alphabet}-1)\frac{1}{\ell}\log_q(\ell+1).
\end{equation}
Although this bound is asymptotically tight as $\ell \to \infty$, it can be improved for some ranges of $\ell$ when the alphabet size $\card{\Alphabet}$ is large (e.g. for the sum-rank metric where $\card{\Alphabet} = q^{mn}$). In this section we derive a different lower bound that does not depend on $\card{\Alphabet}$ but instead on $\AlphabetWeight$. Specifically for the sum-rank metric this new bound is tighter than \eqref{eq:types_lower_bound} for a large range of $\ell$.

Let $X_{\beta}, X_{\beta,1}, X_{\beta,2},X_{\beta,3},\ldots$ be \textcolor{review}{independent and identically distributed} random variables taking values in $\Alphabet$ with probability distribution $P_{\beta}$ given in \eqref{equ:boltzmann_distribution}. For $a \in \Alphabet$, define the function
\begin{align}\label{eq:phi_beta}
   \qquad\qquad \varphi_\beta(a) := -\log_q\left(P_{\beta}(a)\right) \  = \beta \AlphabetWeight(a) + \log_q\left( F_{\Alphabet}(q^{-\beta}) \right).
\end{align}
Note that $H_{\rho} = \mathbb{E}[\varphi_\beta(X_{\beta,i})]$ with $\rho = \rho(\beta)$ for each $i \in \N$. 
As a consequence of Chebyshev’s inequality \cite{tchebichef1874valeurs}, for any $\gamma > 0$ it holds
\begin{align}\label{eq:chebyshev}
    \Pr\left(\left|\frac{1}{\ell}\sum_{i=1}^\ell \varphi_\beta(X_{\beta,i}) - H_{\rho} \right| \geq \gamma \right) \leq \frac{\Var(\varphi_\beta(X_{\beta}))}{\ell \gamma^2} = \frac{\beta^2 \Var(\AlphabetWeight(X_{\beta}))}{\ell \gamma^2}.
\end{align}

By choosing $\gamma$ appropriately and by following similar techniques as used in \cite{löliger1994upperbound}, a lower bound for a sum of sphere sizes is derived in Theorem \ref{thm: lower bound sum spheres}.
\vspace{-3mm}
\begin{theorem}\label{thm: lower bound sum spheres}
Given $t = \rho \ell$ and $0 < \varepsilon < 1$, let $\beta$ be defined by the weight constraint \eqref{eq:weightconstraint}. Furthermore, let $\delta = \left(\frac{\ell \Var( \AlphabetWeight(X_{\beta}))}{(1-\varepsilon)}\right)^{1/2}$. Then 
\begin{equation}
    \sum_{\substack{-\delta < j < \delta\\ j \in \Z}} |\sphere_{t + j}^\ell| 
    \ > \
    \varepsilon \ q^{\ell H_\rho - |\beta| \delta}.
\end{equation}
\end{theorem}
\vspace{-7mm}
\begin{proof}\renewcommand{\qedsymbol}{}
    First, let us define a probability distribution on $\Alphabet^\ell$ naturally by $P_\beta(v) := \prod_{i=1}^\ell P_\beta(v_i)$ for $v = (v_1,\ldots,v_\ell) \in \Alphabet^\ell$. Let $Y_\beta$ denote a random variable taking values in $\A^\ell$ with this probability distribution. Note that $Y_\beta$ is identically distributed as the tuple $(X_{\beta,1},\ldots,X_{\beta,\ell})$.
    Next, for any $v \in \Alphabet^\ell$, the expression $\card{\SumAlphabetWeight(v) - \rho \ell} < \delta$ is equivalent to \vspace{-2mm}
    \begin{align}
        \card{\frac{1}{\ell}\beta\SumAlphabetWeight(v) - \beta \rho } < \frac{|\beta| \delta}{\ell}.
    \end{align}
    Note that, using Lemma \ref{lem: in terms of generating function} and \eqref{eq:phi_beta}, we rewrite the left hand side of the inequality as \vspace{-2mm}
    \begin{align}
        \card{\frac{1}{\ell}\beta\SumAlphabetWeight(v) - \beta \rho } 
        =
        \card{\frac{1}{\ell}\sum_{i=1}^{\ell}\varphi_\beta(v_i) -  H_\rho}.
    \end{align}
   Hence, we have \vspace{-2mm}
    \begin{align}
        \Pr\left(\card{\SumAlphabetWeight(Y_\beta) - \rho \ell} < \delta\right) = \Pr\left(\left|\frac{1}{\ell}\sum_{i=1}^\ell \varphi_\beta(X_{\beta,i}) \ - \ H_{\rho} \right|  < \frac{|\beta| \delta}{\ell} \right).
    \end{align}
    Applying Chebyshev's inequality (see \eqref{eq:chebyshev}), we obtain a lower bound
    \begin{align}
       \Pr\left(\card{\SumAlphabetWeight(Y_\beta) - \rho \ell} < \delta\right) \geq 1 - \frac{\ell \Var(\AlphabetWeight(X_{\beta}))}{ \delta^2} = \varepsilon .
    \end{align}
    Now, let us denote $U := \displaystyle{\bigcup_{\substack{-\delta < j < \delta \\ j \in \Z}} } \sphere_{\rho\ell+j}^\ell$ and note that we can write
    \begin{align}
         \Pr\left(\card{\SumAlphabetWeight(Y_\beta) - \rho \ell} < \delta\right)  
         = \Pr\left( Y_\beta \in U  \right) 
         = \sum_{v  \in  U}  P_\beta(v).
    \end{align}
     Furthermore, note that the inequality $\card{\frac{1}{\ell}\sum_{i=1}^{\ell}\varphi_\beta(v_i) -  H_\rho} < \frac{|\beta| \delta}{\ell}$ implies that $q^{-\ell H_{\rho} + |\beta|\delta} > P_\beta(v)  $. 
    The proof is completed observing that
    \begin{equation}
        \text{}\hspace{17mm} \sum_{\substack{-\delta < j < \delta \\ j \in \Z}}  \card{\sphere_{\rho\ell+j}^\ell}   q^{-\ell H_{\rho} + |\beta|\delta} =  \ \card{U}   q^{-\ell H_{\rho} + |\beta|\delta} > \sum_{v  \in  U}  P_\beta(v) \geq \varepsilon.  \hspace{17mm} \square
    \end{equation}
\end{proof}
 Using the inequality
$
  {\displaystyle{  \max_{\substack{-\delta < j < \delta\\ j \in \Z}} }}|\sphere_{t + j}^\ell| \ \geq \ \frac{1}{2\lceil\delta\rceil - 1}{\displaystyle{ \sum_{\substack{-\delta < j < \delta\\ j \in \Z}} }} |\sphere_{t + j}^\ell| $ we get an alternative bound in Theorem \ref{thm:entropy LB 2}.
\begin{theorem}\label{thm:entropy LB 2}
Given $t = \ell \rho$ and $0 < \varepsilon < 1$, let $\beta$ be defined by the weight constraint \eqref{eq:weightconstraint} and $\delta= \left(\frac{\ell \Var( \AlphabetWeight(X_{\beta}))}{(1-\varepsilon)}\right)^{1/2}$. Then 
\begin{equation}
    \max_{\substack{-\delta < j < \delta\\ j \in \Z}} \ \frac{1}{\ell} \log_q |\sphere_{t + j}^\ell| 
   \  > \  H_{\rho} - \frac{|\beta| \delta}{\ell} - \frac{1}{\ell} \log_q \left(
    \frac{2\lceil\delta\rceil - 1}{\varepsilon}\right) . 
\end{equation}
\end{theorem}

Empirically, good bounds seem to be obtained for $\varepsilon$ close to 0 (this corresponds to a relatively small $\delta$ close to $\ell^{1/2} \Var(\AlphabetWeight(X_{\beta}))^{1/2}$). There exists a number $ 0 < \varepsilon_0 < 1$ such that setting $\varepsilon = \varepsilon_0$ implies $\delta  = \lfloor \ell^{1/2} \Var(\AlphabetWeight(X_{\beta}))^{1/2} \rfloor + 1$; we suspect that the best bounds are obtained for $\varepsilon$ close to this $\varepsilon_0$.

 Moreover, for constant $\varepsilon$ and $\rho$, this bound coincides asymptotically with the bound from Theorem \ref{thm:upperbound_sphere_saddlepoint} as $\ell \to \infty$  and is therefore asymptotically tight.
\section{Bounds on Spheres in the Sum-rank Metric}\label{sec: Bounds for Spheres in the Sum-rank Metric}

In this section we derive improved closed-form upper and lower bounds on the size of a sphere in the sum-rank metric. Hence we fix $m$, $\eta$ and $q$ and we use $\NM_q(m,\eta, t)$ to denote the number of matrices of rank $t$ over $\F_q^{m \times \eta}$.
Following~\eqref{eq:exact_sphere_size}, an exact formula for the size of the $\ell$-dimensional sphere of radius $t$ in the sum-rank metric is given by
\begin{align}\label{eq: exact_sphere_size_sum-rank}
\card{\sphere^{m,\eta,\ell,q}_t} = \sum_{\substack{(t_1,\ldots, t_\ell) \in \mathcal{T}_t^\ell}} \,  \prod_{i=1}^\ell \NM_q(m,\eta, t_{i}),
\end{align}
with $\mathcal{T}^{\ell}_{t}$  the set of all (ordered) integer partitions of $t$ of length $\ell$ with part sizes not exceeding $\mu = \mu_{\Sigma R} = \min\{m,\eta\}$, i.e.,
\begin{align}
\mathcal{T}_t^\ell := \left\lbrace(t_1,t_2,\ldots,t_\ell) \in \{0,1,\ldots,\mu\}^\ell \ : \ t_1 + t_2 + \ldots + t_\ell = t \right\rbrace.
\end{align}
A current method
of obtaining both closed-form upper and lower bounds on $\card{\sphere^{m,\eta,\ell,q}_t}$ uses \textcolor{review}{the following} steps:\\
\begin{enumerate}
\itemsep0.5em
    \item Derive a closed-form upper bound $b^{\text{up}}_q(m,\eta, t)$ and lower bound $b^{\text{low}}_q(m,\eta, t)$
    on  $\NM_q(m,\eta, t)$;
    \item Find a partition $t^{\text{up}} = (t_1^{\text{up}},\ldots,t_\ell^{\text{up}}) \in \mathcal{T}_t^\ell$ that maximizes $\prod_{i=1}^\ell b^{\text{up}}_q(m,\eta, t_{i}^{\text{up}})$ and a partition $t^{\text{low}} = (t_1^{\text{low}},\ldots,t_\ell^{\text{low}}) \in \mathcal{T}_t^\ell$ that maximizes $\prod_{i=1}^\ell b^{\text{low}}_q(m,\eta, t_{i}^{\text{low}})$, for example with the use of Lagrange multipliers;
    \item Derive a closed-form upper bound $N_t^\ell$ on $\card{\mathcal{T}_t^{\ell}}$.
\end{enumerate}
\text{}\\
This strategy yields the following general bounds on the size of a sphere:
\begin{align}\label{eq: current general bounds sum-rank}
   \prod_{i=1}^\ell b^{\text{low}}_q(m,\eta, t_{i}^{\text{low}}) \, \leq \,  \card{\sphere^{m,\eta,\ell,q}_t} \,\leq\, N_t^\ell \prod_{i=1}^\ell b^{\text{up}}_q(m,\eta, t_{i}^{\text{up}}).
\end{align}
In \cite{Ott2021BoundsCodes}, and further discussed in \cite{gruica2023densities}, the authors used  $b^{\text{low}}_q(m,\eta, t) = \gamma_q^{-1} q^{t(m+\eta-t)}$ 
to obtain the closed-form lower bounds
\begin{align}\label{eq: lower  bound ott}
\gamma_q^{-\ell} q^{t(m+\eta-\frac{t}{\ell}) - \frac{\ell}{4}} \, \leq \, \gamma_q^{-\ell} \binom{\ell}{r}q^{t(m+\eta - \frac{t}{\ell})+\frac{r^2}{\ell}-r}  \leq  \card{\sphere^{m,\eta,\ell,q}_t},
\end{align}
where $r$ satisfies $t \equiv r \mod \ell$ with $0  \leq r < \ell$ and
$\gamma_q$ is a special instance of the $\boldsymbol{q}$-\textbf{Pochhammer symbol}  defined as 
\begin{align}
   (a;x)_k := \prod_{i=0}^{k-1} (1-a  x^i), \quad (a;x)_\infty := \prod_{i=0}^\infty (1-a  x^i), \quad \gamma_q := \left(\frac{1}{q};\frac{1}{q}\right)_{\infty}^{-1} = \prod_{i=1}^\infty (1-q^{-i})^{-1}.
\end{align}
Likewise in \cite{puchinger2022generic} this strategy is used with
$b^{\text{up}}_q(m,\eta, t) = \gamma_q q^{t(m+\eta-t)}$ and $N_t^\ell = \binom{\ell + t - 1}{\ell-1}$ to obtain the closed-form upper bounds
\begin{align}\label{eq: upper  bound sven}
\card{\sphere^{m,\eta,\ell,q}_t} \textcolor{review}{ \leq \gamma_q^\ell \binom{\ell + t - 1}{\ell - 1} q^{t(m+\eta-\frac{t}{\ell})+\frac{r^2}{\ell}-r}} \leq \gamma_q^\ell \binom{\ell + t - 1}{\ell - 1} q^{t(m+\eta-\frac{t}{\ell})}.
\end{align}
\noindent
\textcolor{review}{where $r$ satisfies $t \equiv r \mod \ell$ with $0  \leq r < \ell$.} Note that as  $\ell$ increases, the discrepancy between the lower bounds of \eqref{eq: lower  bound ott} and upper bounds of \eqref{eq: upper  bound sven} also greatly increases. Hence in the regime of large $\ell$ better bounds can be obtained. 

First, we reuse the current strategy \eqref{eq: current general bounds sum-rank} with a different upper bound $b^{\text{up}}_q(m,\eta, t)$ on $\NM_q(m,\eta, t)$ that works better for large $\ell$ and small $t$ in Section \ref{subsec: bounds on matrices}. We also derive an improved lower bound that we will use later in Section \ref{subsec: lower bound OGF}. 

Secondly, in Section \ref{subsec: upper bound convolution}, we investigate a different strategy where, instead of estimating the sum in \eqref{eq: exact_sphere_size_sum-rank} directly, we exploit the  nature of the sum-rank sphere size as an $\ell$-fold discrete convolution, and bound it
with continuous convolutions for which closed-form expression exist.


Lastly, in Section \ref{subsec: lower bound OGF}, we utilize  
$\sphere^{m,\eta, \ell, q}(z) = \left(\sphere^{m,\eta, 1, q}(z)\right)^\ell$, where we derive an alternative function that serves as a coefficient-wise lower bound for $\sphere^{m,\eta, 1, q}(z)$.
The $\ell$-th power of this function is calculated more efficiently, since its roots are expressed easily.




\subsection{Bounds on the Number of Matrices of Fixed Rank}\label{subsec: bounds on matrices}

For $a, b \in \N$ we define the $\boldsymbol{q}$-\textbf{binomial coefficient} as 
\begin{align}
    \begin{bmatrix}a \\ b \end{bmatrix}_{q} = \prod_{i = 1}^{b}\frac{1-q^{a-b+i}}{1-q^{i}} = \prod_{j = 0}^{b-1}\frac{q^a-q^{j}}{q^b-q^{j}}.
\end{align}
Then, by~\cite{migler2004weight}, the number of matrices of rank $t$ over $\F_q^{m \times \eta}$ is computed as 
\begin{align}
    \NM_q(m,\eta, t) = \begin{bmatrix}m\\t\end{bmatrix}_q \ \prod_{i=0}^{t-1} (q^\eta - q^i).
\end{align}

Let $q \geq 2$, $\mu = \mu_{\Sigma R} = \min\{m,\eta\}$ and $\mathfrak{M} := \max\{m,\eta\}$. We will lower-bound $\NM_q(m,\eta, t)$ in two steps, eventually leading to Proposition \ref{prop: new gamma LB number matrices} below; first, the $q$-binomial coefficients satisfy a useful lower bound  that follows from elementary arguments (see~\cite[Lemma 2.2]{ihringer2015phdthesis}):
\begin{align}\label{eq: ihringer}
\begin{bmatrix}\mu\\i\end{bmatrix}_q 
    &\geq 
    \begin{cases}
        (1+\frac{1}{q})q^{i(\mu-i)} & \text{if $0 < i < \mu$}\\
	1 & \text{if $i \in \{0,\mu\}$}\\
    \end{cases}.
\end{align} 

\begin{lemma}\label{lem: q-poch upper bound}
Let $0 < i < \mu$ and  $q \geq 2$. Then
\[
\begin{bmatrix}
\mu\\i\end{bmatrix}_{1/q^2} \leq \left(\frac{1}{q^2};\frac{1}{q^2}\right)_\infty^{-1} \leq 1+\frac{1}{q}.
\]
\end{lemma}
\vspace{-0.5cm}
\begin{proof}\renewcommand{\qedsymbol}{}
By Euler's pentagonal number theorem  \cite[Thm 15.5]{van2001course} we can rewrite and bound the $q$-Pochhammer symbol as follows:
\begin{align}
\left(\frac{1}{q^2};\frac{1}{q^2}\right)_\infty &= 1 + \sum_{n=1}^{\infty} (-1)^n\left[\left(\frac{1}{q}\right)^{3n^2-n}+\left(\frac{1}{q}\right)^{3n^2+n}\right]\\
&= 1 - \left(\frac{1}{q}\right)^{2} - \left(\frac{1}{q}\right)^{4} + \left(\frac{1}{q}\right)^{10} + \left(\frac{1}{q}\right)^{14} - \ldots \\
&\geq 1 - \left(\frac{1}{q}\right)^{2} - \left(\frac{1}{q}\right)^{4}\\
&\hspace{-0.5cm}\stackrel{\text{(for $q \geq 2$)}}{\geq} 1 - \frac{1}{q^2 - 1} \geq 1 - \frac{1}{q+1} = \frac{q}{q+1} .
\end{align}
So for $0 < i < \mu$ we have
\begin{align}
\text{}\hspace{10mm}\begin{bmatrix}
    \mu\\i\end{bmatrix}_{1/q^2} 
    =  \frac{ \left(\frac{1}{q^2};\frac{1}{q^2}\right)_\mu}{\left(\frac{1}{q^2};\frac{1}{q^2}\right)_i \left(\frac{1}{q^2};\frac{1}{q^2}\right)_{\mu-i}} 
    \leq \frac{1}{\left(\frac{1}{q^2};\frac{1}{q^2}\right)_i}
    \leq \frac{1}{\left(\frac{1}{q^2};\frac{1}{q^2}\right)_\infty} 
    \leq 1+\frac{1}{q}.\hspace{10mm} \square
\end{align}

\end{proof}

\noindent
As direct corollary of  \eqref{eq: ihringer} and Lemma \ref{lem: q-poch upper bound} we obtain for all $0 \leq i \leq \mu$ the inequality 
\begin{align}\label{eq: 1q2 leq q}
    \begin{bmatrix}\mu\\i\end{bmatrix}_{1/q^2} q^{i(\mu - i)}  
        \leq 
    \begin{bmatrix}\mu\\i\end{bmatrix}_{q}.
\end{align}
For $i\in \{0,\mu\}$ this follows trivially as the $q$-binomial coefficients equal 1 then.

Secondly, for $a,b,c \in \mathbb{N}$ with $0 \leq a < b < c$ we have 

\begin{align}
    \left( \prod_{j=0}^{b-1} (q^c - q^j) \right)^{a} &= \left( \prod_{j=0}^{a-1} (q^c - q^j) \right)^{a} \prod_{k=a}^{b-1} (q^c - q^k)^a \\
    &\leq \left( \prod_{j=0}^{a-1} (q^c - q^j) \right)^{a} \prod_{k=a}^{b-1} (q^c - q^a)^a \\
    &<  \left( \prod_{j=0}^{a-1} (q^c - q^j) \right)^{a} \prod_{k=a}^{b-1} \left( \prod_{j=0}^{a-1} (q^c - q^j) \right) =  \left( \prod_{j=0}^{a-1} (q^c - q^j) \right)^{b}. \\
\end{align}
Hence, we observe that
\begin{align}\label{eq: gamma q m eta}
 \prod_{j=0}^{i-1} (q^\mathfrak{M} - q^j)  > \left( \prod_{j=0}^{\mu-1} (q^\mathfrak{M} - q^j) \right)^{i/\mu}
= q^{i \, \mathfrak{M}} \left({\gamma_{q,m,\eta}}^{-1}\right)^{i/\mu} ,
\end{align}
where we introduce the notation
\[\label{eq: gamma_q_m_eta}
{\gamma_{q,m,\eta}} := \prod_{j=\mathfrak{M}-\mu+1}^{\mathfrak{M}} \left(1 - q^{-j}\right)^{-1}
\]
(with $\mu = \mu_{\Sigma R} = \min\{m,\eta\}$ and $\mathfrak{M} := \max\{m,\eta\}$). Combining \eqref{eq: 1q2 leq q} and \eqref{eq: gamma q m eta} leads to a new lower bound on the number of matrices of rank $i$.
\begin{proposition}\label{prop: new gamma LB number matrices} 
    For $m, \eta, i \in \N$ with $0 \leq i \leq \mu$, we have the lower bound
    \begin{align}
        \left({\gamma_{q,m,\eta}}^{-1/\mu}\right)^i \begin{bmatrix}\mu\\i\end{bmatrix}_{1/q^2} q^{i(m + \eta - i)}\leq \NM_q(m,\eta, i).
    \end{align}
\end{proposition}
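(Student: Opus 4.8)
The plan is to start from the exact formula $\NM_q(m,\eta,i) = \begin{bmatrix}m\\i\end{bmatrix}_q \prod_{j=0}^{i-1}(q^\eta - q^j)$ and to lower-bound each of the two factors separately using the inequalities already assembled in the excerpt. Since the formula is symmetric under swapping $m$ and $\eta$ (both equal $\begin{bmatrix}m\\i\end{bmatrix}_q\prod_{j=0}^{i-1}(q^\eta-q^j)$ and $\begin{bmatrix}\eta\\i\end{bmatrix}_q\prod_{j=0}^{i-1}(q^m-q^j)$ — this is the standard fact that rank-$i$ matrix counts are transpose-invariant), I may assume without loss of generality that $\mu = \min\{m,\eta\}$ and $\mathfrak M = \max\{m,\eta\}$, and write $\NM_q(m,\eta,i) = \begin{bmatrix}\mu\\i\end{bmatrix}_q \prod_{j=0}^{i-1}(q^{\mathfrak M} - q^j)$. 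I will handle the boundary cases $i = 0$ and $i = \mu$ separately (where several of the cited inequalities degenerate to equalities) and the generic case $0 < i < \mu$ in the main argument.

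For the generic case, the key step is to combine the two prepared estimates. First, \eqref{eq: 1q2 leq q} gives $\begin{bmatrix}\mu\\i\end{bmatrix}_{1/q^2} q^{i(\mu-i)} \leq \begin{bmatrix}\mu\\i\end{bmatrix}_q$. Second, \eqref{eq: gamma q m eta} gives $\prod_{j=0}^{i-1}(q^{\mathfrak M} - q^j) > q^{i\mathfrak M}\,(\gamma_{q,m,\eta}^{-1})^{i/\mu} = q^{i\mathfrak M}\,(\gamma_{q,m,\eta}^{-1/\mu})^i$. Multiplying these two lower bounds together yields
\begin{align*}
    \NM_q(m,\eta,i) = \begin{bmatrix}\mu\\i\end{bmatrix}_q \prod_{j=0}^{i-1}(q^{\mathfrak M}-q^j)
    > \left(\gamma_{q,m,\eta}^{-1/\mu}\right)^i \begin{bmatrix}\mu\\i\end{bmatrix}_{1/q^2}\, q^{i(\mu-i)}\, q^{i\mathfrak M}.
\end{align*}
It then remains to identify the exponent: $i(\mu-i) + i\mathfrak M = i(\mu + \mathfrak M - i) = i(m + \eta - i)$, since $\{\mu,\mathfrak M\} = \{m,\eta\}$ as a multiset. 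This produces exactly $\left(\gamma_{q,m,\eta}^{-1/\mu}\right)^i \begin{bmatrix}\mu\\i\end{bmatrix}_{1/q^2} q^{i(m+\eta-i)} \leq \NM_q(m,\eta,i)$, as claimed (with strict inequality in this range, which is stronger than needed).

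For the boundary cases $i \in \{0,\mu\}$, I would check the claim directly: for $i=0$ both sides equal $1$; for $i = \mu$ we have $\begin{bmatrix}\mu\\\mu\end{bmatrix}_{1/q^2} = 1$, the exponent $i(m+\eta-i)$ becomes $\mu(\mathfrak M)$ when $\mu(\mu+\mathfrak M-\mu)$... more carefully $\mu(m+\eta-\mu) = \mu\mathfrak M$, and one checks $\gamma_{q,m,\eta}^{-1}\, q^{\mu\mathfrak M} = \prod_{j=\mathfrak M-\mu+1}^{\mathfrak M}(1-q^{-j})\cdot q^{\mu\mathfrak M} = \prod_{j=0}^{\mu-1}(q^{\mathfrak M}-q^{\mathfrak M-\mu+1+\,\cdot}) $; rather than push the index juggling here, the cleanest route is to note $\NM_q(m,\eta,\mu) = \begin{bmatrix}\mu\\\mu\end{bmatrix}_q \prod_{j=0}^{\mu-1}(q^{\mathfrak M}-q^j) = \prod_{j=0}^{\mu-1}(q^{\mathfrak M}-q^j)$, and from \eqref{eq: gamma q m eta} with $i=\mu$ one gets $\prod_{j=0}^{\mu-1}(q^{\mathfrak M}-q^j) = q^{\mu\mathfrak M}\gamma_{q,m,\eta}^{-1}$ (an equality when $i=\mu$), which matches the right-hand side of the claimed bound with $i=\mu$. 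The main obstacle, such as it is, is purely bookkeeping: making sure the exponents $i(\mu-i) + i\mathfrak M$ collapse correctly to $i(m+\eta-i)$ and that the edge cases where \eqref{eq: 1q2 leq q} and \eqref{eq: gamma q m eta} are stated with non-strict versus strict inequalities are reconciled with the non-strict inequality in the statement; there is no analytic difficulty beyond invoking the two displayed lemmas.
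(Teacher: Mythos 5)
Your proof is correct and takes essentially the same route as the paper: both split the exponent as $i(m+\eta-i)=i(\mu-i)+i\mathfrak{M}$, use the transpose symmetry to write $\NM_q(m,\eta,i)=\begin{bmatrix}\mu\\i\end{bmatrix}_q\prod_{j=0}^{i-1}(q^{\mathfrak{M}}-q^j)$, and then combine \eqref{eq: 1q2 leq q} with \eqref{eq: gamma q m eta}. Your explicit check of the boundary cases $i\in\{0,\mu\}$ (where the strict inequality \eqref{eq: gamma q m eta} degenerates to an equality) is a small extra care the paper omits, but the core argument is identical.
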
 
\begin{proof}\renewcommand{\qedsymbol}{}
    Let $\mathfrak{M} = \max\{m,\eta\}$. Then
    \begin{align}
      \text{}\hspace{10mm}\begin{bmatrix}\mu\\i\end{bmatrix}_{1/q^2} q^{i(m + \eta - i)} \left({\gamma_{q,m,\eta}}^{-1/\mu}\right)^i &= 	 \begin{bmatrix}\mu\\i\end{bmatrix}_{1/q^2} q^{i(\mu - i)}  q^{i\mathfrak{M}} \left({\gamma_{q,m,\eta}}^{-1/\mu}\right)^i \\
    	&\leq    \begin{bmatrix}\mu\\i\end{bmatrix}_{q}   q^{i\mathfrak{M}} \left({\gamma_{q,m,\eta}}^{-1}\right)^{i/\mu}\\
    	&<  \begin{bmatrix}\mu\\i\end{bmatrix}_{q}  \prod_{j=0}^{i-1}  \left(q^\mathfrak{M}-q^{j}\right) = \NM_q(m,\eta, i) \hspace{11mm}\square
    \end{align}
\end{proof}
\vspace{-5mm}
Next, we observe that 
\begin{align}
\NM_q(m,\eta, t) = \left(\prod_{i=1}^{t}\frac{(1-q^{-m+i-1})(1-q^{-\eta+i-1})}{(1-q^{-i})}\right) q^{t(m+\eta-t)}.
\end{align}
Let us now introduce the function
\[
\kappa_{q, m, \eta}(t) := \left(\frac{(1-q^{-m})(1-q^{-\eta})}{(1-q^{-1})}\right)^{t}.
\]
Noticing that $\kappa_{q, m, \eta}(t) \leq \prod_{i=1}^{t}\frac{(1-q^{-m+i-1})(1-q^{-\eta+i-1})}{(1-q^{-i})}$, we  derive the following upper bound on $\NM_q(m,\eta, t)$.

\begin{proposition}\label{prop: upper bound C2 on number matrices} 
    For $m, \eta, t \in \N$ we have the upper bound
    
    \begin{align}
        \NM_q(m,\eta, t) \ \leq \ \kappa_{q, m, \eta}(t) q^{t(m+\eta-t)}.
    \end{align}
\end{proposition}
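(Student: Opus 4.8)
The plan is to reduce the statement to a factor-by-factor comparison inside the product formula for $\NM_q(m,\eta,t)$ displayed just before the proposition. First I would dispose of the degenerate range $t > \mu$ with $\mu = \min\{m,\eta\}$: there $\NM_q(m,\eta,t) = 0$, whereas the right-hand side $\kappa_{q,m,\eta}(t)\,q^{t(m+\eta-t)}$ is a product of strictly positive quantities (each of $1-q^{-m}$, $1-q^{-\eta}$, $1-q^{-1}$ is positive since $q \geq 2$), so the inequality holds trivially. It therefore remains to treat $0 \leq t \leq \mu$.

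For $t \leq \mu$, I would start from the identity
\[
\NM_q(m,\eta, t) = \left(\prod_{i=1}^{t}\frac{(1-q^{-m+i-1})(1-q^{-\eta+i-1})}{1-q^{-i}}\right) q^{t(m+\eta-t)},
\]
so that it suffices to bound the displayed product by $\kappa_{q,m,\eta}(t)$. The key point is that for every $1 \leq i \leq t \leq \mu$ the $i$-th factor is at most the $i=1$ factor: since $i-1 \geq 0$ we have $q^{-m+i-1} \geq q^{-m}$ and $q^{-\eta+i-1} \geq q^{-\eta}$, hence $1-q^{-m+i-1} \leq 1-q^{-m}$ and $1-q^{-\eta+i-1} \leq 1-q^{-\eta}$; and since $i \geq 1$ we have $1-q^{-i} \geq 1-q^{-1}$. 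All the quantities appearing are strictly positive (here $i-1 \leq \mu-1 \leq \min\{m,\eta\}-1$ guarantees $q^{-m+i-1}, q^{-\eta+i-1} < 1$, and $i \geq 1$ guarantees $1-q^{-i} > 0$), so these three inequalities may be combined into
\[
\frac{(1-q^{-m+i-1})(1-q^{-\eta+i-1})}{1-q^{-i}} \leq \frac{(1-q^{-m})(1-q^{-\eta})}{1-q^{-1}}
\]
for each $i$. Multiplying this over $i = 1, \ldots, t$ (legitimate because both sides are positive) gives $\prod_{i=1}^{t}\frac{(1-q^{-m+i-1})(1-q^{-\eta+i-1})}{1-q^{-i}} \leq \kappa_{q,m,\eta}(t)$, and multiplying through by $q^{t(m+\eta-t)} > 0$ yields the claim.

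I do not anticipate a real obstacle: once the product expansion of $\NM_q(m,\eta,t)$ is in hand, the argument is essentially a one-line monotonicity estimate. The only points that need a little care are tracking signs so that the per-term inequalities can be multiplied (which is what forces the split between $t \le \mu$ and $t > \mu$) and getting the direction of monotonicity right — the two numerator factors decrease in $i$ while the denominator increases — so that the $i=1$ term is the largest.
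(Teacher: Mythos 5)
Your proof is correct and follows essentially the same route the paper intends: starting from the displayed identity $\NM_q(m,\eta,t)=\bigl(\prod_{i=1}^{t}\frac{(1-q^{-m+i-1})(1-q^{-\eta+i-1})}{1-q^{-i}}\bigr)q^{t(m+\eta-t)}$ and bounding each factor by the $i=1$ factor, whose $t$-th power is exactly $\kappa_{q,m,\eta}(t)$. Your separate handling of the degenerate range $t>\mu$ and the sign-tracking are careful additions, but the core argument matches the paper's.
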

With the property $\kappa_{q, m, \eta}(t_1)\kappa_{q, m, \eta}(t_2) = \kappa_{q, m, \eta}(t_1+t_2)$, we perform the same steps of the proof of bounds \eqref{eq: upper  bound sven} in \cite{puchinger2022generic} with $\kappa_{q, m, \eta}(t)$ instead of $\gamma_q$ and obtain Theorem \ref{thm:improvedSven} as \textcolor{review}{a} new upper bound on sum-rank sphere sizes.
\begin{theorem}\label{thm:improvedSven} 
    Given positive integers $m, \eta, \ell, t$ and a prime power $q$, we have
    \begin{align}
        \card{\sphere^{m,\eta,\ell,q}_t} \textcolor{review}{ \leq \kappa_{q, m, \eta}(t) \binom{\ell + t - 1}{\ell - 1} q^{t(m+\eta-\frac{t}{\ell})+\frac{r^2}{\ell}-r}} \leq \kappa_{q, m, \eta}(t) \binom{\ell + t - 1}{\ell - 1} q^{t(m+\eta-\frac{t}{\ell})}
    \end{align}
\end{theorem}
\noindent
\textcolor{review}{where $r$ satisfies $t \equiv r \mod \ell$ with $0  \leq r < \ell$.}
Importantly, for sufficiently small $t$ the coefficient $\kappa_{q, m, \eta}(t)$ is smaller than $\gamma_q$. In this case the bounds of Theorem \ref{thm:improvedSven} are tighter than \cite[Theorem 5]{puchinger2022generic} (see \eqref{eq: upper  bound sven}).

\subsection{Convolution Upper Bound}\label{subsec: upper bound convolution}
Let $f(x)$ and $g(x)$ be two real-valued functions defined on the natural numbers (or on a larger domain). For $t \in \N$, we define the \textbf{discrete convolution} by 
\begin{align}
     [f *  g](t) := \sum_{i = 0}^t f(i)g(t-i).
\end{align}
The $\ell$-fold discrete convolution $[f * f * \cdots  * f]$ (well-defined by associativity of $*$) is denoted as $f^{*\ell}$. 
The strategy in this subsection comes from the observation that \eqref{eq: exact_sphere_size_sum-rank} is equivalent to 
\[
\card{\sphere^{m,\eta,\ell,q}_t} = \NM_q(m,\eta,x)^{*\ell}(t). \vspace{2mm}
\]
Let $C(t)$ be a real-valued function depending on the parameters $m,\eta,q$ and satisfying
\begin{align}
    \NM_q(m,\eta,t) \leq C(t) q^{t(m + \eta - t)} \quad \text{and} \quad C(t_1)  C(t_2) = C(t_3) C(t_4)
\end{align}
whenever $t_1 + t_2 = t_3 + t_4$.  By Proposition \ref{prop: upper bound C2 on number matrices}, examples of such functions are $\gamma_q$ and $\kappa_{q, m, \eta}(t)$.
The reason for looking at these functions is because they work well with discrete convolutions, i.e.,
    $\left[C(x)f(x) * C(x)g(x)\right](t) = C(0)C(t)[f * g](t).$ 
Therefore, we can upper bound the sphere sizes as follows
\begin{align}\label{eq: upper bound convolution 1}
\card{\sphere^{m,\eta,\ell,q}_t}
     \leq \left(C(x)q^{x(m + \eta - x)}\right)^{*\ell}(t)
     = C(0)^{\ell-1}C(t)\left(q^{x(m + \eta - x)}\right)^{*\ell}(t).
\end{align}
Proposition \ref{prop: Bounding sum by integral} provides a formula to easily bound convolutions.
\begin{proposition}\label{prop: Bounding sum by integral} 
Consider $f_\ell(x) := q^{x(m+\eta - x/\ell)}$ for $x \in \R$ and $\ell \in \N$. Functions of this form satisfy the convolution inequality 
\begin{align}
    [f_{\ell_1} * f_{\ell_2}](t) \leq  \left(1 + \sqrt{\frac{\ell_1 \ell_2\pi}{(\ell_1 + \ell_2) \ln{q}}}\right)f_{\ell_1 + \ell_2}(t).
\end{align}
\end{proposition}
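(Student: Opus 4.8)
The plan is to normalize the convolution by $f_{\ell_1+\ell_2}(t)$ and recognize the resulting sum as a shifted one–dimensional Gaussian sum, which is then bounded by the corresponding Gaussian integral plus one. Writing $f_\ell(x) = q^{x(m+\eta)}q^{-x^2/\ell}$, for each $i\in\{0,\dots,t\}$ one has
\begin{align*}
\frac{f_{\ell_1}(i)\,f_{\ell_2}(t-i)}{f_{\ell_1+\ell_2}(t)} = q^{-\left(\frac{i^2}{\ell_1}+\frac{(t-i)^2}{\ell_2}-\frac{t^2}{\ell_1+\ell_2}\right)} = q^{-\frac{\ell_1+\ell_2}{\ell_1\ell_2}\left(i-\frac{\ell_1 t}{\ell_1+\ell_2}\right)^2},
\end{align*}
where the second equality is the elementary completing-the-square identity $\frac{i^2}{\ell_1}+\frac{(t-i)^2}{\ell_2} = \frac{t^2}{\ell_1+\ell_2}+\frac{\ell_1+\ell_2}{\ell_1\ell_2}\bigl(i-\frac{\ell_1 t}{\ell_1+\ell_2}\bigr)^2$. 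Setting $c := \frac{(\ell_1+\ell_2)\ln q}{\ell_1\ell_2}>0$ and $x_0 := \frac{\ell_1 t}{\ell_1+\ell_2}$, this gives
\begin{align*}
[f_{\ell_1} * f_{\ell_2}](t) = f_{\ell_1+\ell_2}(t)\sum_{i=0}^{t} e^{-c(i-x_0)^2} \le f_{\ell_1+\ell_2}(t)\sum_{i\in\Z} e^{-c(i-x_0)^2}.
\end{align*}
Thus it suffices to prove the shift-independent estimate $\sum_{i\in\Z} e^{-c(i-x_0)^2}\le 1+\sqrt{\pi/c}$, and to observe that $\sqrt{\pi/c}=\sqrt{\pi\ell_1\ell_2/((\ell_1+\ell_2)\ln q)}$ is exactly the quantity appearing in the claim.

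For that estimate I would use that $h(x):=e^{-cx^2}$ is even, satisfies $h(0)=1$, and is strictly decreasing on $[0,\infty)$. Write $x_0=\lfloor x_0\rfloor+\theta$ with $\theta\in[0,1)$ and reindex $\sum_{i\in\Z}h(i-x_0)=\sum_{k\in\Z}h(k-\theta)$. The lattice point nearest the peak is $k=0$ if $\theta<\tfrac12$ and $k=1$ if $\theta\ge\tfrac12$; in either case its contribution is at most $h(0)=1$. For every other $k$, attach to $k-\theta$ the length-one interval $I_k$ adjacent to it on the side toward the origin (so $I_k=[k-1-\theta,k-\theta]$ for $k\ge1$ and $I_k=[k-\theta,k+1-\theta]$ for $k\le-1$, with $I_0=[-\theta,1-\theta]$ in the case $\theta\ge\tfrac12$). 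One checks that on $I_k$ every $x$ satisfies $|x|\le|k-\theta|$, hence $h(x)\ge h(k-\theta)$ there, so $h(k-\theta)\le\int_{I_k}h$. The intervals $I_k$ are pairwise disjoint up to endpoints and their union is all of $\R$, so summing gives $\sum_{k\neq k_{\text{peak}}}h(k-\theta)\le\int_\R h(x)\,dx=\sqrt{\pi/c}$. Adding the peak term yields $\sum_{k\in\Z}h(k-\theta)\le 1+\sqrt{\pi/c}$, which is what we needed.

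The computational content — the completing-the-square identity and the value $\int_\R e^{-cx^2}\,dx=\sqrt{\pi/c}$ — is routine. The one step requiring care is the unimodal ``sum $\le$ max $+$ integral'' lemma: namely the bookkeeping that assigns to each off-peak lattice point a disjoint unit interval on which the Gaussian dominates the corresponding term value, together with the case split $\theta<\tfrac12$ versus $\theta\ge\tfrac12$ for which lattice point is the peak. I would isolate this as a separate lemma (it is the standard integral comparison for unimodal nonnegative functions, and could alternatively just be cited), so that the proof of the proposition itself reduces to the two displayed lines above.
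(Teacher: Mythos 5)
Your proposal is correct and follows essentially the same route as the paper: normalize by $f_{\ell_1+\ell_2}(t)$ via completing the square to get a shifted Gaussian sum, then bound that sum by the peak term (at most $1$) plus the Gaussian integral over $\R$, which equals $\sqrt{\pi\ell_1\ell_2/((\ell_1+\ell_2)\ln q)}$. The only difference is presentational — you extend the sum to all of $\Z$ and spell out the unimodal sum-versus-integral comparison that the paper invokes as the "well-known technique" of bounding sums by integrals — so your write-up is a more detailed version of the same argument.
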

\vspace{-0.2cm}
\begin{proof}
First of all, note that 
\begin{align}
[f_{\ell_1} * f_{\ell_2}](t) 
&= \sum_{i=0}^{t} q^{t(m+\eta - \frac{t}{\ell_1+\ell_2}) - \frac{1}{\ell_1}i^2 - \frac{1}{\ell_2}(t-i)^2 + \frac{t^2}{\ell_1+\ell_2} } \\
&= f_{\ell_1 + \ell_2}(t) \sum_{i=0}^{t}q^{-\left(\frac{1}{\ell_1} + \frac{1}{\ell_2}\right)\left(i - \frac{\ell_1}{\ell_1 + \ell_2}t\right)^2}  \textcolor{review}{< f_{\ell_1 + \ell_2}(t)\!\!\! \sum_{i=-\infty}^{\infty} \!\!q^{-\left(\frac{1}{\ell_1} + \frac{1}{\ell_2}\right)\left(i - \frac{\ell_1}{\ell_1 + \ell_2}t\right)^2}\!\!.}
\end{align}
Next, we will apply the technique of bounding summations with integrals, bounding the discrete convolution with a continuous one. To make the notation more compact, let $g(x):= q^{-\left(\frac{1}{\ell_1} + \frac{1}{\ell_2}\right)\left(x - \frac{\ell_1}{\ell_1 + \ell_2}t\right)^2}$. Then, using unimodality of $g(x)$ with its mode/peak at $x_{\text{p}} := \frac{\ell_1}{\ell_1 + \ell_2}t$ where $g$ attains $g(x_{\text{p}}) = 1$, we  bound different parts of the summation as follows:
\begingroup
\addtolength{\jot}{1em}
\textcolor{review}{\begin{align}
\sum_{i=-\infty}^{\left\lfloor x_{\text{p}}\right\rfloor -1} g(i)  \; \leq \; \int_{-\infty}^{\left\lfloor x_{\text{p}}\right\rfloor} g(x)  \ dx,& \qquad \qquad \sum_{i = \left\lceil x_{\text{p}}\right\rceil -1}^\infty g(i)  \; \leq \; \int_{\left\lceil x_{\text{p}}\right\rceil}^\infty g(x) \ dx, \\ 
\sum_{i=\left\lfloor x_{\text{p}}\right\rfloor }^{\left\lceil x_{\text{p}}\right\rceil} g(i) & \; \leq \;  1 + \int_{\left\lfloor x_{\text{p}}\right\rfloor}^{\left\lceil x_{\text{p}}\right\rceil} g(x) \ dx. 
\end{align}}
\endgroup
Note that these bounds hold both if $x_{\text{p}} \in \N$ and $x_{\text{p}} \notin \N$.
Hence,
\begin{align}
    \sum_{i=-\infty}^{\infty}q^{-\left(\frac{1}{\ell_1} + \frac{1}{\ell_2}\right)\left(i - \frac{\ell_1}{\ell_1 + \ell_2}t\right)^2}  &\leq  1 + \int_{-\infty}^{\infty}q^{-\left(\frac{1}{\ell_1} + \frac{1}{\ell_2}\right)\left(x - \frac{\ell_1}{\ell_1 + \ell_2}t\right)^2} dx \\ 
    &=  1 + \int_{-\infty}^{\infty}q^{-\left(\frac{1}{\ell_1} + \frac{1}{\ell_2}\right)x^2} dx = 1+\sqrt{\frac{\pi}{\left(\frac{1}{\ell_1} + \frac{1}{\ell_2}\right) \ln{q}}}
\end{align}
and thus the proposition follows.\hfill
\end{proof}

\textcolor{review}{A tighter bound could possibly be obtained by examining the summation $\sum_{i=-\infty}^{\infty}q^{-\left(\frac{1}{\ell_1} + \frac{1}{\ell_2}\right)\left(i - \frac{\ell_1}{\ell_1 + \ell_2}t\right)^2}$ more closely, which is a special case of a \textit{(Jacobi) theta function}. Studying theta functions lies outside of the scope of this paper, but could be a promising direction for further improvements on Proposition \ref{prop: Bounding sum by integral}.}

Setting $\ell_1 = 1$ and applying Proposition \ref{prop: Bounding sum by integral} inductively for $\ell_2 = 1,\ldots,\ell-1$, we obtain new upper bounds on the sum-rank sphere sizes using \eqref{eq: upper bound convolution 1} with $C(t)$ of our choice.
\begin{theorem}\label{thm: integral UB}
Let $m, \eta, \ell, q, t$ be positive integers. Choosing $C(t)$ equal to $\gamma_q$ or $\kappa_{q, m, \eta}(t)$, we observe the following bounds, respectively
\begin{align}
    \card{\sphere^{m,\eta,\ell,q}_t} &\leq \gamma_q^\ell \ \prod_{k=1}^{\ell-1}\left(1 + \sqrt{\frac{k\pi}{(k+1) \ln{q}}}\right) q^{t(m+\eta - t/\ell)}\\
    \card{\sphere^{m,\eta,\ell,q}_t} &\leq \kappa_{q, m, \eta}(t) \prod_{k=1}^{\ell-1}\left(1 + \sqrt{\frac{k\pi}{(k+1) \ln{q}}}\right)  q^{t(m+\eta - t/\ell)}
\end{align}    
where the further simplifications 
\[
\prod_{k=1}^{\ell-1}\left(1 + \sqrt{\frac{k\pi}{(k+1) \ln{q}}}\right) \leq \left(1 + \sqrt{\frac{(\ell-1)\pi}{\ell\ln{q}}} \right)^{\ell-1} < 
\left(1 + \sqrt{\frac{\pi}{ \ln{q}}} \right)^{\ell-1}
\]
can be made.
\end{theorem}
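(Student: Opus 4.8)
The plan is to reduce Theorem~\ref{thm: integral UB} to an iterated application of Proposition~\ref{prop: Bounding sum by integral} together with the convolutional structure of sphere sizes already set up in this subsection. First I would recall that for the single-block case $\ell=1$ we have, by Proposition~\ref{prop: upper bound C2 on number matrices} (and trivially for $C(t)=\gamma_q$ via the bound $\NM_q(m,\eta,t)\le\gamma_q q^{t(m+\eta-t)}$ cited from~\cite{puchinger2022generic}),
\begin{align*}
    \card{\S^{m,\eta,1,q}_t} = \NM_q(m,\eta,t) \le C(t)\, q^{t(m+\eta-t)} = C(t)\, f_1(t),
\end{align*}
where $f_\ell(x)=q^{x(m+\eta-x/\ell)}$ as in Proposition~\ref{prop: Bounding sum by integral}. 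Since $\S^{m,\eta,\ell,q}(z) = (\S^{m,\eta,1,q}(z))^\ell$ by the product rule for \acp{OGF}, the coefficients satisfy $\card{\S^{m,\eta,\ell,q}_t} = \big(\card{\S^{m,\eta,1,q}_\bullet}\big)^{*\ell}(t)$, and by coefficient-wise monotonicity of convolution of non-negative sequences this is $\le \big(C(x) f_1(x)\big)^{*\ell}(t)$. Using the multiplicativity hypothesis $C(t_1)C(t_2)=C(t_3)C(t_4)$ whenever $t_1+t_2=t_3+t_4$ — which is exactly the identity making $\left[C(x)f(x) * C(x)g(x)\right](t) = C(0)C(t)[f*g](t)$ hold, as already noted in the text — an induction on $\ell$ peels off the $C$ factors and gives
\begin{align*}
    \card{\S^{m,\eta,\ell,q}_t} \le C(0)^{\ell-1} C(t)\, \big(f_1\big)^{*\ell}(t).
\end{align*}
Here $C(0)=1$ for both choices ($\gamma_q$ is constant, wait — for $C(t)=\gamma_q$ one instead gets $C(0)^{\ell-1}C(t)=\gamma_q^{\ell-1}\cdot\gamma_q=\gamma_q^\ell$; for $C(t)=\kappa_{q,m,\eta}(t)$ one has $\kappa_{q,m,\eta}(0)=1$, so the prefactor is just $\kappa_{q,m,\eta}(t)$), which already matches the claimed leading constants.

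Next I would bound the $\ell$-fold self-convolution $\big(f_1\big)^{*\ell}(t)$ by induction, writing $f_1^{*\ell} = f_1^{*(\ell-1)} * f_1$ and identifying $f_1^{*(\ell-1)}$ with a scalar multiple of $f_{\ell-1}$. Concretely, the induction hypothesis is
\begin{align*}
    \big(f_1\big)^{*k}(t) \le \left(\prod_{j=1}^{k-1}\Big(1+\sqrt{\tfrac{j\pi}{(j+1)\ln q}}\Big)\right) f_k(t),
\end{align*}
which holds trivially for $k=1$ (empty product). For the inductive step, coefficient-wise monotonicity of convolution and the induction hypothesis give $\big(f_1\big)^{*(k+1)}(t) = \big[\big(f_1\big)^{*k} * f_1\big](t) \le \big(\prod_{j=1}^{k-1}(\cdots)\big)\,[f_k * f_1](t)$, and then Proposition~\ref{prop: Bounding sum by integral} with $\ell_1=k$, $\ell_2=1$ yields $[f_k * f_1](t) \le \big(1+\sqrt{\tfrac{k\pi}{(k+1)\ln q}}\big) f_{k+1}(t)$, completing the step. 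Unwinding to $k=\ell$ and recalling $f_\ell(t) = q^{t(m+\eta-t/\ell)}$ gives both displayed inequalities once the $C$-prefactor from the previous paragraph is multiplied back in.

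Finally, the stated simplification is elementary: the map $j\mapsto j/(j+1)$ is increasing, so for $1\le k\le \ell-1$ we have $\sqrt{\tfrac{k\pi}{(k+1)\ln q}} \le \sqrt{\tfrac{(\ell-1)\pi}{\ell\ln q}}$, and since $(\ell-1)/\ell<1$ this is $<\sqrt{\pi/\ln q}$; one may then replace every factor in the product by this uniform bound if a cruder but cleaner estimate is wanted. The main obstacle — though it is more bookkeeping than genuine difficulty — is keeping the two inductions cleanly separated: one must be careful that the ``peeling off $C$'' step uses only the additive-index multiplicativity of $C$ and not any property of $f_1$, while the self-convolution bound uses only Proposition~\ref{prop: Bounding sum by integral} and not the specific form of $C$; conflating them risks a spurious dependence of the constant on $t$. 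A secondary point to check is that Proposition~\ref{prop: Bounding sum by integral} is stated for real arguments $x\in\R$ in $f_\ell$, so applying it with the non-integer ``effective length'' $k$ coming from the induction is legitimate.
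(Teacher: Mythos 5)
Your proposal is correct and follows essentially the same route as the paper: bound the single-block sphere by $C(t)\,q^{t(m+\eta-t)}$, use the convolution structure and the multiplicativity of $C$ to extract the prefactor $C(0)^{\ell-1}C(t)$ (giving $\gamma_q^{\ell}$ resp.\ $\kappa_{q,m,\eta}(t)$), and then apply Proposition~\ref{prop: Bounding sum by integral} inductively with one of the two indices equal to $1$, which yields exactly the product $\prod_{k=1}^{\ell-1}\bigl(1+\sqrt{k\pi/((k+1)\ln q)}\bigr)$. The only cosmetic difference is that you take $\ell_1=k,\ \ell_2=1$ while the paper takes $\ell_1=1,\ \ell_2=k$, which is the same by symmetry, and your worry about ``non-integer effective length'' is moot since $k$ is always an integer in the induction.
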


\subsection{Lower Bound via Ordinary Generating Functions}\label{subsec: lower bound OGF}
Recall that the OGF of sum-rank sphere sizes is given by
\begin{align}\label{eq:sumranksphere_size}
\sphere^{m,\eta,\ell,q}(z) = \sum_{t=0}^{\mu\ell} \card{\sphere_t^{m,\eta,\ell,q}} \ z^t = \sphere^{m,\eta,1,q}(z)^\ell.
\end{align}
Note that finding a closed-form bound for $\card{\sphere_t^{m,\eta,\ell,q}}$ is equal to finding such a bound for the coefficients of the $\ell$-th power on the right-hand side of \eqref{eq:sumranksphere_size}.
The approach that we thus take is, to bound the function $\sphere^{m,\eta,1,q}(z) = \sum_{i=0}^\mu \NM_q(m,\eta,i) \, z^i$ coefficient-wise using another polynomial $\mathcal{F}(z)$ whose $\ell$-th power can be computed more easily, as we will see later. We consider the polynomial
\[
\mathcal{F}(z):=\sum_{i=0}^{\mu} q^{i(m+\eta-i) } \begin{bmatrix}\mu\\i\end{bmatrix}_{1/q^2} z^i. 
\]
This polynomial satisfies the following chain of coefficient-wise inequalities:
\begin{align}
\sum_{i=0}^{\mu} \gamma_q^{-1} q^{i(m+\eta -i)}z^i \ \preccurlyeq_c \ \gamma_q^{-1} \mathcal{F}(z)
\ \preccurlyeq_c \ \mathcal{F}({\gamma_{q,m,\eta}}^{-1/\mu} z) \ \preccurlyeq_c \ \sphere^{m,\eta, 1, q}(z).
\end{align}
    The first inequality follows from $ 1 \leq \begin{bmatrix}\mu\\i\end{bmatrix}_{1/q^2}$, the second inequality follows from $\gamma_q^{-1} \leq {\gamma_{q,m,\eta}}^{-1} \leq {\gamma_{q,m,\eta}}^{-i/\mu} \leq 1$ for $0 \leq i \leq \mu$ and the third from Proposition \ref{prop: new gamma LB number matrices}.
    Since these coefficient-wise inequalities are preserved under $\ell$-th powers, we obtain
    \begin{align}\label{eq:chain two inequalities}
    \left(\sum_{i=0}^{\mu} \gamma_q^{-1} q^{i(m+\eta -i)}z^i\right)^\ell \ \preccurlyeq_c  \ \mathcal{F}({\gamma_{q,m,\eta}}^{-1/\mu} z)^\ell \ \preccurlyeq_c \ \sphere^{m,\eta, \ell, q}(z).
    \end{align}

Note that the left-most polynomial is used in deriving the lower bound  \cite[Lemma 2]{Ott2021BoundsCodes} (see  \eqref{eq: lower  bound ott}). Since the middle polynomial is coefficient-wise greater, estimating its coefficients can potentially lead to improved lower bounds. Let us thus focus on estimating the $\ell$-th power of $\mathcal{F}({\gamma_{q,m,\eta}}^{-1/\mu} z)$.

To our advantage, the polynomial $\mathcal{F}(z)$ (and hence also $\mathcal{F}({\gamma_{q,m,\eta}}^{-1/\mu} z)$) can be factored nicely into linear parts, as the following proposition shows.

\begin{proposition} Let $m,\eta \in \N$ and $\mu = \min\{m,\eta\}$. Then
\[
\mathcal{F}(z) =\sum_{i=0}^{\mu} q^{i(m+\eta-i) } \begin{bmatrix}\mu\\i\end{bmatrix}_{1/q^2} z^i = \prod_{i=1}^{\mu} (1 + q^{m+\eta-2i + 1} z).
\]
\end{proposition}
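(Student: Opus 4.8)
The plan is to recognize the claimed identity as a direct instance of the $q$-binomial theorem. Recall the finite $q$-binomial theorem (e.g.\ \cite[Chapter 5]{kac2002quantum}): for an indeterminate $x$ and parameter $p$,
\begin{equation*}
    \prod_{i=0}^{\mu-1}(1 + p^{i} x) = \sum_{i=0}^{\mu} p^{\binom{i}{2}} \begin{bmatrix}\mu\\i\end{bmatrix}_{p} x^{i}.
\end{equation*}
So first I would set $p := 1/q^{2}$ in this identity, so that the $q$-binomial coefficients become exactly the $\begin{bmatrix}\mu\\i\end{bmatrix}_{1/q^2}$ appearing in the statement, and $p^{\binom{i}{2}} = q^{-i(i-1)}$.

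Next I would choose the substitution for $x$ that converts the geometric-style factors $(1 + p^{i}x)$ into the desired factors $(1 + q^{m+\eta-2i+1}z)$. Comparing exponents of $q$: the $i$-th factor on the product side (for $i = 0,\dots,\mu-1$) is $1 + q^{-2i} x$, and we want this to equal $1 + q^{m+\eta-2(i+1)+1} z = 1 + q^{m+\eta-2i-1}z$. Hence I would take $x = q^{m+\eta-1} z$; then $q^{-2i} x = q^{m+\eta-1-2i} z$ as required, and after reindexing $i \mapsto i+1$ the product matches $\prod_{i=1}^{\mu}(1 + q^{m+\eta-2i+1}z)$ exactly.

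It then remains to check that the coefficient side also matches. With $x = q^{m+\eta-1}z$ the $q$-binomial theorem gives coefficient of $z^{i}$ equal to $q^{-i(i-1)} \begin{bmatrix}\mu\\i\end{bmatrix}_{1/q^2} q^{i(m+\eta-1)} = \begin{bmatrix}\mu\\i\end{bmatrix}_{1/q^2} q^{i(m+\eta-1) - i(i-1)} = \begin{bmatrix}\mu\\i\end{bmatrix}_{1/q^2} q^{i(m+\eta-i)}$, which is precisely $q^{i(m+\eta-i)}\begin{bmatrix}\mu\\i\end{bmatrix}_{1/q^2}$, the coefficient in the definition of $\mathcal{F}(z)$. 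This completes the identification.

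There is no real obstacle here; the only thing to be careful about is bookkeeping of the exponent of $q$ in the substitution and the $\binom{i}{2}$ prefactor, and the off-by-one in the index range (product over $i=0,\dots,\mu-1$ on the textbook side versus $i=1,\dots,\mu$ in the statement), which the reindexing $i\mapsto i+1$ handles. One should also state which normalization of the $q$-binomial theorem is being used, since several conventions differ by a factor of $p^{\binom{i}{2}}$ or by $p^{i}$ versus $p^{i-1}$ in the factors; the version quoted above is the one that makes the computation go through cleanly with $\begin{bmatrix}\mu\\i\end{bmatrix}_{1/q^2}$ as defined in the paper.
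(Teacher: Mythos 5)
Your proposal is correct and follows the same route the paper takes: the paper simply invokes the finite $q$-binomial theorem from \cite[Chapter 5]{kac2002quantum}, and your substitutions $p = 1/q^2$, $x = q^{m+\eta-1}z$ together with the exponent check $i(m+\eta-1)-i(i-1)=i(m+\eta-i)$ are exactly the bookkeeping that makes the cited identity yield the stated factorization.
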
\label{prop: q binomial theorem}
\noindent
The proof follows directly from  the \textit{$q$-binomial theorem}, also known as the \textit{Cauchy binomial theorem} or \textit{Gaussian binomial theorem} (see \cite[Chapter 5]{kac2002quantum}).

Let us now consider  
\begin{align}
    \mathcal{F}(z)^\ell &= \prod_{i=1}^{\mu} \left(1 + q^{-2i} \left(q^{m+\eta + 1} z\right)\right)^\ell \\ 
    &= \prod_{i=1}^{\mu} \, \sum_{j = 0}^\ell \binom{\ell}{j} q^{-2ij} \left(q^{m+\eta + 1} z\right)^j  \\
    &= \sum_{t=0}^{\mu \ell} \left( \sum_{\substack{j_1  + \ldots + j_{\mu} = t \\ 0 \leq j_i \leq \ell}} \; \prod_{i=1}^{\mu}\binom{\ell}{j_i} q^{-2i j_i} \right) \left(q^{m+\eta + 1} z\right)^t.
\end{align}
At first, bounding these sums seems as difficult as those in  \eqref{eq: exact_sphere_size_sum-rank}. However, unlike \eqref{eq: exact_sphere_size_sum-rank},  the factors $\binom{\ell}{j_i} q^{-2ij_i}$ \textcolor{review}{are} also strongly dependent on  $i$ (not only on $j_i$) and increase as $i$ decreases.
 This implies that there is a \textit{unique} optimal tuple $(j_1,\ldots,j_\mu)$ that maximizes $\prod_{i=1}^{\mu}\binom{\ell}{j_i} q^{-2ij_i}$, in contrast to  \eqref{eq: exact_sphere_size_sum-rank} where every tuple $(t_1,\ldots, t_\ell)$ contributes the same term $ \prod_{i=1}^\ell \NM_q(m,\eta, t_{i})$ to the sum as any of its permutations.
 
For each $t$ we will simply bound the sum by only one term $\prod_{i=1}^{\mu}\binom{\ell}{j_i} q^{-2ij_i}$ with the near-optimal choice
\begin{align}\label{eq:cases_sum}
    j_i =
 \begin{cases}
     \ell &  \text{for }\ 1 \leq i \leq t_* \\
     r & \text{for }\ i = t_*+1  \\
    0 &  \text{for }\ t_*+2 \leq i \leq \mu 
    \end{cases}
\end{align}
where $t = t_* \ell + r$ for some $t_* \in \N$ and $0  \leq r < \ell$. This yields the bound
$$
\mathcal{F}(z)^\ell 
\succcurlyeq_c \sum_{t=0}^{\mu \ell} 
\left(\binom{\ell}{r}q^{- 2r(t_* + 1)}\prod_{i=1}^{t_*} q^{-2i\ell}\right) \left(q^{m+\eta + 1} z\right)^t
=
\sum_{t=0}^{\mu \ell} \binom{\ell}{r} q^{t(m+\eta - \frac{t}{\ell})+\frac{r^2}{\ell}-r} z^t.
$$
Finally, by substituting ${\gamma_{q,m,\eta}}^{-1/\mu} z$ for $z$ in this inequality and applying \eqref{eq:chain two inequalities} we obtain the following result.
\begin{theorem}\label{thm:lowerbound}  Let $m,\eta \in \N$, $\mu = \min\{m,\eta\}$ and $q$ a prime power. Let $t = t_* \ell + r \leq \mu \ell$ with $t_* \in \N$ and $0  \leq r < \ell$. Then 
    \begin{align}
\left({\gamma_{q,m,\eta}}^{-1}\right)^{t/\mu}  \binom{\ell}{r} q^{t(m+\eta - \frac{t}{\ell})+\frac{r^2}{\ell}-r} \leq \card{\sphere_t^{m,\eta,\ell,q}},
    \end{align}
with $\gamma_{q,m,\eta}$ given in  \eqref{eq: gamma_q_m_eta}.
\end{theorem}
Notice that remarkably, aside for the improved coefficients in front, we have obtained the same lower bound as {\cite[Lemma 2]{Ott2021BoundsCodes} (see  \eqref{eq: lower  bound ott}) via a completely different method. However, by choosing $j_i$ differently in \eqref{eq:cases_sum} there is still room for future optimization.

In addition, for $0 < i < \mu$ we have
    \begin{align}
         \frac{\NM_q(m,\eta, i)^2}{\NM_q(m,\eta, i-1)\NM_q(m,\eta, i+1)} &= \frac{(q^m-q^{i-1})}{(q^m-q^i)}\frac{(q^\eta-q^{i-1})}{(q^\eta-q^i)}\frac{q^{i}(q^{i+1}-1)}{q^{i-1}(q^i-1)}\geq q^2 > 1
     \end{align}
     and thus $(\NM_q(m,\eta, i))_{i=0}^{\mu}$ is a \textit{logaritmically concave sequence} (\textit{log-concave} for short).
    Moreover, since convolution preserves log-concavity, it holds that for all $\ell$ the sequence
    $\left(\card{\sphere_i^{m,\eta,\ell,q}}\right)_{i=0}^{\mu\ell}$ is log-concave. Hence we can take the \textit{convex hull} of the logarithm of\\ 
    the bound in Theorem \ref{thm:lowerbound} to obtain a slight improvement. In this context, the convex hull of a sequence $(a_i)_{i=0}^{\mu \ell}$ is the sequence  $(b_i)_{i=0}^{\mu \ell}$ with 
    $$
    b_i = \max_{x,y > 0}\left\{a_i, \, \frac{y}{x+y}a_{i-x} + \frac{x}{x+y}a_{i+y}\right\},
    $$
    and as \textcolor{review}{a} consequence of Carathéodory's theorem on convex hulls this is the lowest-valued concave sequence that is coefficient-wise greater or equal to $(a_i)_{i=0}^{\mu \ell}$. 
\begin{theorem}\label{thm: convex hull lower bound}
The convex hull of the sequence 
$$\left(\log_q\left( \left({\gamma_{q,m,\eta}}^{-1}\right)^{i/\mu}  \binom{\ell}{r} q^{i(m+\eta - \frac{i}{\ell})+\frac{r^2}{\ell}-r}\right)\right)_{i=0}^{\mu\ell}$$ is a lower bound on $\log_q\card{\sphere_t^{m,\eta,\ell,q}}$.
\end{theorem}

\section{Comparison of Bounds}\label{sec:comparison}
In this section, we compare the new bounds presented in this paper with the existing bounds related to the sphere size in the sum-rank metric. 
In Figure~\ref{fig:sphere-size-vs-t-v2} the relationship between the growth rate $\frac{1}{\ell}\log_q \card{ \sphere_{t}^{m, \eta, \ell, q}}$ of the sphere size and the normalized radius $\rho$ is shown.
We observe that the upper bound using Theorem~\ref{thm:upperbound_sphere_saddlepoint} and the lower bound using Theorem~\ref{thm:entropy LB 2} are the tightest bounds and very close to the exact values.
The computation of these bounds necessitates the evaluation of the entropy $H_\rho$. Computing $H_\rho$ is straightforward for a specified $\beta$, whereas determining $\beta$ for a given $\rho$ cannot in general be achieved in a closed-form manner, as outlined in~\eqref{eq:weightconstraint}.

For scenarios where prioritizing closed-form expressions dependent on $\rho$ is essential, the derived alternative bounds may better suit the intended use-cases. In  Figure \ref{fig:sphere-size-vs-t-v2}, the upper bounds from Theorem~\ref{thm: integral UB} using $\kappa_{q, m, \eta}$, Theorem \ref{thm: integral UB} using $\gamma_q$ and Theorem~\ref{thm:improvedSven} are consolidated into a single piece-wise function by selecting the minimum value among these bounds. The transition points are indicated by circles.

We observe that for the new closed-form upper and lower bounds we improve significantly in comparison to the already existing closed-form bounds given in~\cite[Theorem 5]{puchinger2022generic} and~\cite[Lemma 2]{Ott2021BoundsCodes}. Furthermore, the new bounds are potentially useful tools for obtaining improved closed-form Gilbert-Varshamov or sphere-packing bounds, as introduced in~\cite{byrne2021fundamental} and~\cite{Ott2021BoundsCodes}.

\begin{figure}[H]
    \centering
    \begin{tikzpicture}
    \begin{axis}[
	width=\columnwidth,
	height=0.9\columnwidth,
	grid=both,
	grid style={dotted,gray},
	legend cell align=left,
	legend style={font=\small},
	legend pos=south east,
	mark options={solid},
	mark size=3,
        xmin = 0,
        ymin = 0,
        xmax = 5,
        ymax = 26,
        x filter/.code={\pgfmathparse{#1*1}\pgfmathresult},
	xlabel={Normalized radius $\rho$},
	ylabel={$\frac{1}{\ell}\log_q \card{ \sphere_{\rho\ell}^{m, \eta, \ell, q}}$}
	]
   
    \addplot[plotcyan,line width = 1pt, densely dash dot] table[x=rho,y=UB] {./TikzGraphs/UBjulian_q=2_m=5_eta=5_l=100.txt};\addlegendentry{\scriptsize  \cite[Theorem 5]{puchinger2022generic}, cf. \eqref{eq: upper  bound sven}};

    \addplot[plotred,line width = 1pt, dash dot dot] table[x=rho,y=UB] {./TikzGraphs/UBjulianAlt_q=2_m=5_eta=5_l=100.txt};\addlegendentry{\scriptsize  Theorem \ref{thm:improvedSven} }; 
    
    \addplot[plotred,line width = 1pt, densely dashed] table[x=rho,y=UB] {./TikzGraphs/UBmy5_q=2_m=5_eta=5_l=100.txt};\addlegendentry{\scriptsize  Theorem \ref{thm: integral UB} using $\kappa_{q, m, \eta}$};
    \addplot[plotred,line width = 1pt, dotted] table[x=rho,y=UB] {./TikzGraphs/UBmy4_q=2_m=5_eta=5_l=100.txt};\addlegendentry{\scriptsize  Theorem \ref{thm: integral UB} using $\gamma_q$};

   \filldraw[color=plotred] (0.71, 8.86) circle (1.7pt);
   \filldraw[color=white] (0.71, 8.86) circle (1pt);
    \filldraw[color=plotred] (1.98, 19.29) circle (1.7pt);
    \filldraw[color=white] (1.98, 19.29) circle (1pt);
    
    \addplot[red,line width = 1.5pt, loosely dotted] table[x=rho,y=UB] {./TikzGraphs/EntropyUB_q=2_m=5_eta=5.txt};\addlegendentry{\scriptsize  Theorem \ref{thm:upperbound_sphere_saddlepoint}};

    \addplot[plotyellow,line width = 1pt, solid] table[x=rho,y=UB] {./TikzGraphs/Actual_q=2_m=5_eta=5_l=100.txt};\addlegendentry{\scriptsize  Exact value};

    \addplot[plotblue,line width = 1pt, loosely dashed] table[x=rho,y=UB]{./TikzGraphs/EntropyLB_q=2_m=5_eta=5_l=100_eps=001.txt};\addlegendentry{\scriptsize  Theorem \ref{thm:entropy LB 2} with $\varepsilon = 0.01$};

    \addplot[plotgreen,line width = 0.5pt, solid] table[x=rho,y=UB] {./TikzGraphs/LBconvex_q=2_m=5_eta=5_l=100.txt};\addlegendentry{\scriptsize  Theorem \ref{thm: convex hull lower bound}};
    \addplot[plotgreen,line width = 1pt, densely dotted] table[x=rho,y=UB] {./TikzGraphs/LBmy_q=2_m=5_eta=5_l=100.txt};\addlegendentry{\scriptsize  Theorem \ref{thm:lowerbound}};
    
    \addplot[plotpurple,line width = 1pt, loosely dash dot] table[x=rho,y=UB] {./TikzGraphs/LBott_q=2_m=5_eta=5_l=100.txt};\addlegendentry{\scriptsize \cite[Lemma 2]{Ott2021BoundsCodes}, cf. \eqref{eq: lower  bound ott}};
 
    \end{axis}
\end{tikzpicture}
    \caption{Comparison of upper and lower bounds for the sphere $\sphere_{\rho\ell}^{m, \eta, \ell, q}$ as function of $\rho$ with parameters $q=2$, $m=5$, $\eta=5$, $\ell=100$.}
    \label{fig:sphere-size-vs-t-v2}
\end{figure}

In Figure~\ref{fig:sphere-size-vs-L} we show the tightness of the improved bounds for different numbers of blocks. We choose the same values for the parameters $q$, $m$, $t$ and $n$ as for the bounds given in~\cite{puchinger2022generic}. Notably, the bounds proposed in~\cite{puchinger2022generic} exhibit considerable looseness in scenarios where $\ell$ becomes substantially large (i.e., when the sum-rank metric converges to the Hamming metric). While superior bounds are already established for the Hamming metric (i.e., $\ell=n$), our analysis illustrates substantial enhancements for $\ell < n$ compared to existing bounds.
\newpage

\begin{figure}
    \centering    
    \begin{tikzpicture}
    \begin{axis}[
	width=\columnwidth,
	height=0.9\columnwidth,
	grid=both,
	grid style={dotted,gray},
	legend cell align=left,
	legend style={font=\small,at={(0.5,0.97)}},
        legend pos=north east,
	mark options={solid},
	mark size= 0,
        xmin = 1,
        ymin = 300,
        ymax = 900,
        xmax = 60,
        xtick={1,2,3,4,5,6,10,12,15,20,30,60},
    xticklabels={\scalebox{0.9}{1},\scalebox{0.9}{2},\scalebox{0.9}{3},\scalebox{0.9}{4},\scalebox{0.9}{5},\scalebox{0.9}{6},\scalebox{0.9}{10},\scalebox{0.9}{12},\scalebox{0.9}{15},\scalebox{0.9}{20},\scalebox{0.9}{30},\scalebox{0.9}{60}},
	xlabel={Number of blocks $\ell$},
	ylabel={$\log_q \card{ \sphere_{t}^{m, \eta, \ell, q}}$},
	]
    
   \addplot[plotred,line width = 1pt, dotted, mark=*] table[x=t,y=value] {./TikzGraphs/LUBmy4_q=2_m=40_n=60_t=10.txt};\addlegendentry{\scriptsize  Theorem \ref{thm: integral UB} using $\gamma_q$};

    \addplot[plotcyan,line width = 1pt, densely dash dot, mark=*] table[x=t,y=value]{./TikzGraphs/Ljulian_q=2_m=40_n=60_t=10.txt};\addlegendentry{\scriptsize  \cite[Theorem 5]{puchinger2022generic}, cf. \eqref{eq: upper  bound sven}};

    \addplot[plotred,line width = 1pt, densely dashed, mark=*] table[x=t,y=value]{./TikzGraphs/LUBmy5_q=2_m=40_n=60_t=10.txt};\addlegendentry{\scriptsize  Theorem \ref{thm: integral UB} using $\kappa_{q, m, \eta}$};
 
    \addplot[plotred,line width = 1pt, dash dot dot, mark=*] table[x=t,y=value] {./TikzGraphs/LjulianAlt_q=2_m=40_n=60_t=10.txt};\addlegendentry{\scriptsize  Theorem \ref{thm:improvedSven} }; 
    
    \addplot[plotyellow,line width = 1pt, solid, mark=*] table[x=t,y=value]{./TikzGraphs/Lactual_q=2_m=40_n=60_t=10.txt};\addlegendentry{\scriptsize  Exact value};

    \addplot[plotgreen,line width = 1pt, densely dotted, mark=*] table[x=t,y=value] {./TikzGraphs/LLBmy_q=2_m=40_n=60_t=10.txt};\addlegendentry{\scriptsize  Theorem \ref{thm:lowerbound}};
    
    \addplot[plotpurple,line width = 1pt, loosely dash dot, mark=*] table[x=t,y=value] {./TikzGraphs/LLBottBetter_q=2_m=40_n=60_t=10.txt};\addlegendentry{\scriptsize  \cite[Lemma 2]{Ott2021BoundsCodes}, cf. \eqref{eq: lower  bound ott}};
    
	\end{axis}
\end{tikzpicture}
    \caption{Comparison of upper and lower bounds for the sphere $\sphere_{t}^{m, \eta, \ell, q}$ as function of $\ell$ with parameters $q=2$, $m=40$, $t=10$ and keeping $n= \eta \ell=60$ constant.}
    \label{fig:sphere-size-vs-L}
\end{figure}
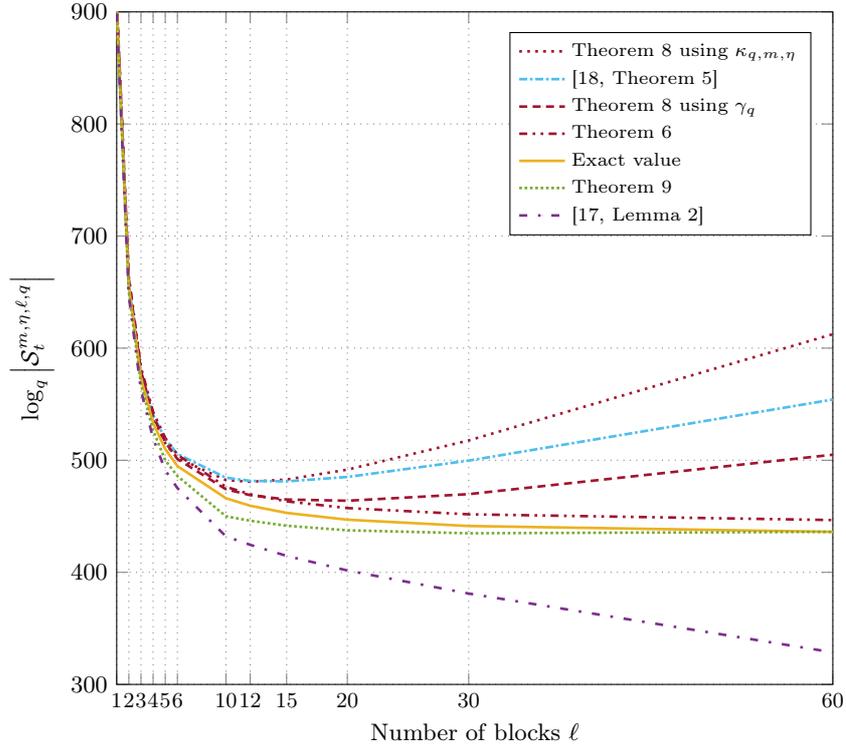

\section{Conclusions}\label{sec:conclusions}

In this paper, we derive novel upper and lower bounds on the size of an $\ell$-dimensional sphere. First, we consider a general setting for any coordinate-additive metric, and then we focus specifically on spheres in the sum-rank metric. 
In the general setting, we employ techniques from information theory and typical sequences to bound the logarithmic growth rate of the sphere's size, utilizing the entropy of a Boltzmann-like distribution representing the marginal distribution of a typical sequence. 
Although these bounds apply to any coordinate-additive metric, we also derive new bounds specifically for the sum-rank metric, using convolution arguments for the upper bounds and ordinary generating functions for the lower bounds. 
Finally, we compare our bounds to existing ones in the sum-rank metric and observe that, when choosing the most suitable bound for each regime, our bounds outperform the existing closed-form bounds.

\bmhead{Acknowledgments}
H.~Sauerbier Couvée is supported by the European Research Council (ERC) under the European Union’s Horizon 2020 research and innovation programme (Grant agreement No. 801434) and the Bavarian State Ministry of Science and Arts via the project EQAP. T.~Jerkovits and J.~Bariffi acknowledge the financial support by the Federal Ministry of Education and Research of Germany in the program of "Souver\"an. Digital. Vernetzt." Joint project 6G-RIC, project identification number: 16KISK022. J.~Bariffi is funded by the European Union (DiDAX, 101115134). Views and opinions expressed are however those of the author(s) only and do not necessarily reflect those of the European Union or the European Research Council Executive Agency. Neither the European Union nor the granting authority can be held responsible for them.
The authors thank H.~Bartz, A.~Baumeister, G.~Liva, and A.~Wachter-Zeh for their insights and discussions.

\bibliography{references}

\end{document}